\def\draft{0}
\def\llncs{0}
\newcommand{\remove}[1]{}
\newcommand{\ignore}[1]{}
\def\ShowAuthNotes{1}
\def\ShowAuthNotes{0}
\definecolor{DarkBlue}{RGB}{0,0,150}
\definecolor{llg}{gray}{0.95}
\definecolor{lg}{gray}{0.85}
\newtheorem{theorem}{Theorem}[section]
\newtheorem{proposition}[theorem]{Proposition}
\newtheorem{definition}[theorem]{Definition}
\newtheorem{lemma}[theorem]{Lemma}
\newtheorem{remark}{Remark}
\newtheorem{corollary}[theorem]{Corollary}
\spnewtheorem{prop}{Property}{\bfseries}{\itshape}
\spnewtheorem{fact}{Fact}{\bfseries}{\itshape}
\spnewtheorem{subclaim}{Claim}[theorem]{\bfseries}{\itshape}
\spnewtheorem{tlclaim}[theorem]{Claim}{\bfseries}{\itshape}
\spnewtheorem{assumption}{Assumption}{\bfseries}{\itshape}
\newcommand{\secparam}{\lambda}
\newcommand{\secp}{\secparam}
\def\pfend{\hfill\qedsymbol}
\def\pfend{}
\def\cK{{\cal K}}
\def\cM{{\cal M}}
\def\cS{{\cal S}}
\def\bbE{{\mathbb E}}
\def\bbI{{\mathbb I}}
\def\bbN{{\mathbb N}}
\def\binset{\{0,1\}}
\def\q2{\lfloor q/2 \rceil}
\newcommand{\abs}[1]{\left\vert {#1} \right\vert}
\newcommand{\norm}[1]{\left\| {#1} \right\|}
\newcommand{\adv}{\mathrm{Adv}}
\newcommand{\mx}[1]{\mathbf{{#1}}}
\newcommand{\zo}{\{0,1\}}
\newcommand{\Ex}{\mathop{\bbE}}
\newcommand{\authnote}[3]{\textcolor{#3}{[{\footnotesize {\bf #1:} { {#2}}}]}}
\newcommand{\authnote}[3]{}
\newcommand{\absnewline}{\ifnum\llncs=1 \\ \fi}
\def\mgp[#1]{\mx{G}_{#1}}
\def\mgip[#1]{\mx{G}_{#1}^{-1}}
\def\mcit[#1]{\mci[#1]^T}
\def\mci[#1]{\mx{C}_{#1}}
\newcounter{hybridcount}
\newcounter{prevhybridcount}
\newcounter{nexthybridcount}
\newcommand{\hrho}{{\hat{\rho}}}
\newcommand{\ket}[1]{|{#1}\rangle}
\newcommand{\bra}[1]{\langle{#1}|}
\newcommand{\ketbra}[1]{\ket{{#1}}\bra{{#1}}}
\newcommand{\tr}{\mathop{\mathrm{Tr}}}
\def\zon{{\binset^{n}}}
\def\adv{{\mathcal{A}}}
\newcommand{\haardist}{\mu}
\newcommand{\haarint}[2]{\int \ketbra{{#1}}^{\otimes {#2}} d\haardist({#1})}
\newcommand{\statedist}{\cS}
\newcommand{\keygen}{\mathsf{KeyGen}}
\newcommand{\stategen}{\mathsf{StateGen}}
\title{State-Based Classical Shadows}
\author{Zvika Brakerski\thanks{Weizmann Institute of Science. \texttt{Email:} \{zvika.brakerski,  nir.magrafta\}@weizmann.ac.il. Supported by the Horizon Europe Research and Innovation Program via ERC Project ACQUA (Grant 101087742).} \and Nir Magrafta\addtocounter{footnote}{-1}\footnotemark \and Tomer Solomon\thanks{Tel Aviv University. \texttt{Email:} tomersolomon@mail.tau.ac.il. Supported in part by AFOSR FA9550-23-1-0387, AFOSR FA9550-23-1-0312, and ISF 2338/23.}}
\date{}
\begin{document}

\maketitle

\begin{abstract}	
	Classical Shadow Tomography (Huang, Kueng and Preskill, Nature Physics 2020) is a method for creating a classical snapshot of an unknown quantum state, which can later be used to predict the value of an a-priori unknown observable on that state. In the short time since their introduction, classical shadows received a lot of attention from the physics, quantum information, and quantum computing (including cryptography) communities. In particular there has been a major effort focused on improving the efficiency, and in particular depth, of generating the classical snapshot.

    Existing constructions rely on a distribution of unitaries as a central building block, and research is devoted to simplifying this family as much as possible. We diverge from this paradigm and show that suitable distributions over \emph{states} can be used as the building block instead. Concretely, we create the snapshot by entangling the unknown input state with an independently prepared auxiliary state, and measuring the resulting entangled state. This state-based approach allows us to consider a building block with arguably weaker properties that has not been studied so far in the context of classical shadows. Notably, our cryptographically-inspired analysis shows that for \emph{efficiently computable} observables, it suffices to use \emph{pseudorandom} families of states. To the best of our knowledge, \emph{computational} classical shadow tomography was not considered in the literature prior to our work.

     Finally, in terms of efficiency, the online part of our method (i.e.\ the part that depends on the input) is simply performing a measurement in the Bell basis, which can be done in constant depth using elementary gates.

\end{abstract}

\tableofcontents

\section{Introduction}

Quantum states are elusive objects. A quantum state over $n$ qubits is fully characterized by its \emph{density matrix} which is a $2^n \times 2^n$ complex valued, positive semidefinite, unit-trace matrix. This means that such a state may require an infinite amount of information to fully specify, and even a finite-precision characterization (say up to a fixed $\ell_1$ distance in matrix norm) requires an exponential amount of information. Indeed, the task of fully characterizing a quantum state (i.e.\ approximating its density matrix) to a finite precision, known as \emph{state tomography}, requires an exponential number of copies of the quantum state \cite{ODonell2016,Haah2017,Song201710QubitEA}. Nevertheless, in terms of information content, given $n$ qubits containing the state, it is only possible to recover $n$ bits of information \cite{holevo1973bounds}.

 We may therefore seek relaxed notions of characterization that require fewer resources. One such relaxation, known as \emph{shadow tomography} \cite{Aaronson2018}
 does not aim to recover the density matrix. Instead, it allows to use a bounded number of copies of the state in order to estimate the outcomes of an exponential number of potential measurements of the input state. Moreover, the number of required copies only scales polylogarithmically with the number of potential measurements to be estimated. On the other hand, shadow tomography requires a quantum algorithm that has quantum access to the copies of the state, and the descriptions of the measurements to be estimated. Its running time in general can be exponential in the number of copies of the state.

Recently, Huang, Kueng and Preskill \cite{huang2020predicting}, proposed the notion of \emph{classical shadows}. They also consider the task of estimating a large number of measurement outcomes on an unknown quantum state. However, in their outline each copy of the quantum state is measured independently to recover some classical information. Then, given an observable, its value can be estimated using the classical information obtained from all of the individual copies. Specifically, the classical information recovered from each copy of the state can be used to create a ``classical snapshot'' $\hrho$ of the density matrix $\rho$ of the input state. The snapshots converge to the actual density matrix in expectation, and furthermore, the value of every intended measurement on the input state can be estimated by properly combining its values on a sufficient number of snapshots. The quality of this estimate depends on a property of the measurement to be estimated which is called its ``shadow norm''. They showed that for certain classes of measurements, it is possible to achieve good estimates with a polylogarithmic scaling of the number of copies, similarly to shadow tomography.

Classical shadows are quite appealing since they are quite parsimonious in terms of quantum resources. Indeed, the copies of the quantum input state need to be measured in a prescribed manner, and from that point and on, only classical information needs to be stored and processed. Importantly, the processing of the quantum state is indifferent with respect to the specific observable to be estimated. Classical shadows therefore found applications in the characterization of physical systems \cite{zhang2021experimental}, in quantum complexity theory \cite{anshu2024survey}, in quantum information theory \cite{kunjummen2023shadow}, and even in quantum cryptography \cite{cryptoeprint:2023/1620}. 

The usefulness of classical shadows generated a large volume of work on improving its efficiency, e.g.~\cite{akhtar2023scalable,hu2023classical,bertoni2024shallow,hearth2024efficient,schuster2024random}. A central parameter for optimization is the circuit depth of generating the snapshots. Indeed, quantum computers quickly accumulate noise throughout the computation so as a general rule, shallower quantum circuits should be easier to implement than deeper ones.\footnote{Of course there is significant dependence on the actual topology and connectivity of the circuit, but the depth is a reasonable proxy for the noise accumulation if we do not wish to commit to a specific quantum architecture.} For the snapshot generation, therefore, one wishes to minimize the complexity (specifically depth) of the unitary operator that is applied on the input state.

In the context of this paper, we are concerned with the setting of estimating the values of \emph{general observables} on the input state, which is one of the two main settings considered in \cite{huang2020predicting}.\footnote{The other main setting considered there is that of \emph{local} observables. We believe that our work may have some bearing in that setting as well, but more exploration is required to establish whether this is the case.} An observable in this context refers to estimating the expected value of a projective measurement performed on the state. More explicitly, the observable is characterized by a Hermitian matrix $O$, and the value of the observable on the input state is the trace $\tr(O\rho)$. One can verify that this represents the expected value of a process consisting of performing a measurement of the state $\rho$ in some basis, followed by assigning a real value to each basis element and outputting this value.%

To generate classical snapshots, \cite{huang2020predicting} consider a ``sufficiently random'' distribution over unitary operators. Then they sample a unitary from this distribution, apply it on the input state and measure in the standard (computational) basis. The outcome of this measurement, together with the identity of the specific unitary that had been sampled, are post-processed classically to produce the classical snapshot. In the setting of a general observables described above, the proposed distribution is simply the Haar-random distribution over unitaries. Whereas a completely random unitary is extremely inefficient to generate, it turns out that the analysis goes through even when the random unitary is replaced with a weaker object: a unitary $3$-design. This is a distribution over unitaries whose moments, up to the third moment, are identical to those of a random unitary. Fortunately, such objects can be efficiently sampled and implemented in linear depth.\footnote{A knowledgeable reader may be aware that random Clifford circuits are known to be unitary $3$-designs.} Recently, very significant progress has been made when Schuster, Haferkamp and Huang \cite{schuster2024random} presented new constructions of approximate unitary designs with extremely low depth. Indeed, they showed that it is possible to generate approximate unitary $3$-designs using logarithmic depth circuits. Their notion of approximation is \emph{relative}: When viewed as a quantum channel, their unitaries are within $(1\pm \epsilon)$ factor of a random unitary channel. They show that this notion of approximation translates into a tolerable penalty in the performance of the classical shadows.
If we seek to approximate unitary designs, then it appears that this is the best that we can hope for. Indeed, \cite{schuster2024random} present a lower bound showing the tightness of their results with respect to the depth complexity of approximate unitary designs.

The approach we take in this work is a cryptographic one. We view the classical shadows problem as a \emph{pseudorandomness problem}, and ask whether there exist relaxed structures with a lower cost that recover similar performance to the random unitary solution of \cite{huang2020predicting}.\footnote{The random Clifford solution of \cite{huang2020predicting} and the approximate design solution of \cite{schuster2024random} can already be viewed in this perspective as information-theoretic pseudorandom distributions that fool a certain process.}
Specifically, we present a different framework that breaks away from the blueprint of ``apply a unitary and measure''. Our framework does not formally require unitary designs and it allows to minimize the \emph{online depth complexity} of the snapshot generation process. Namely, to minimize the depth of the operations that are applied to the input state. 
Our new blueprint relies on an arguably weaker building block: approximate state-designs (instead of approximate unitary designs). We then show that whereas prior solutions require relative-approximate designs, meaningful results can already be achieved even under a weaker \emph{additive notion}. We do this using a cryptographic approach by presenting a suitable distinguisher for the state design construction. Finally, we observe that the complexity of our distinguisher is roughly as the complexity of the observable to be estimated. This allows us to introduce \emph{computational} pseudorandomness to the setting of classical shadows for efficient observables. Additional details are provided below.

\subsection{Our Results: State-Based Classical Shadows}

Our framework is rather simple. Instead of considering a distribution over unitaries, we consider a distribution over quantum states. Our tomography process, given an input state $\rho$, is to sample a state $\zeta$ from this distribution, and measure the pair $\rho \otimes \zeta$ in the so-called Bell basis. The Bell basis is a basis for $2$-qubit systems, in which all basis elements are maximally entangled, namely correlated in the strongest quantum sense. We measure the first qubit of $\rho$ together with the first qubit of $\zeta$ and so on. This effectively entangles the states $\rho$ and $\zeta$ and produces measurement outcomes that depend on the correlation between the two states. See Figure~\ref{fig:design-circuit} for a circuit diagram of our snapshot generation process, where the Bell basis measurement is presented explicitly using elementary gates.

Before discussing the correctness of our method, and the way we select the distribution of states, let us discuss the complexity of our process. We note that the state $\zeta$ can be generated in a way that is completely independent of the input $\rho$. This means that $\zeta$ can be prepared ahead of time. We may therefore consider the online quantum complexity of the snapshot generation, which in this case is simply the Bell basis measurement which can be implemented by $3$ layers of elementary gates, as shown in the figure. Indeed, a Bell basis measurement is equivalent to the generation of EPR pairs, one of the most elementary operations in quantum computing. Going back to the motivation above, our snapshot generation process causes minimal disturbance to the input state $\rho$ which would allow to characterize it most accurately. One should still be concerned why it is justified to put less weight on the complexity of generating the auxiliary state $\zeta$. First, we point out that in the contexts we consider below, the cost of generating ``useful'' $\zeta$ state is bounded from above by the process of applying an approximate unitary design, and potentially may be lower. So at the very least our method preserves the total complexity of previous methods while reducing the online complexity. Second, since the process of generating $\zeta$ can be done offline, prior to receiving the input $\rho$, and is completely under our control, we may employ stronger noise reduction and fault tolerance techniques in the generation of $\zeta$, including the use of error correction and post-selection, similarly to the techniques used in so-called ``magic-state factories'' e.g.~\cite{o2017quantum}. Indeed, our $\zeta$ plays a very similar role to that of a magic state (magic states \cite{bravyi2005universal} are a method for using the Clifford family of gates to compute gates that are not in the family, using preprocessed auxiliary states).

Let us now see how to analyze state-based classical shadows. We start by calculating the probability of measuring classical values $z,x$ in the measurements in Figure~\ref{fig:design-circuit}, while keeping $\zeta, \rho$ fixed. We see that the probability of $z,x$ is proportional to the correlation between $\rho$ and a state we denote by $\zeta^*_{z,x}$. This state is the complex conjugate of the state $\zeta$, and the state is also conjugated by a Pauli matrix $X^x Z^z$. To the unfamiliar reader it suffices to say that the action of $X^x$ is just the XOR of the state with the fixed string $x$, whereas $Z^z$ is the same operation, but applied in the Hadamard basis. We may therefore consider (as a thought experiment) the quantum channel that applies the measurement, and based on $z,x$ outputs $\zeta^*_{z,x}$. This channel is fully characterized by $\zeta$, and we can in-principle compute its inverse given the distribution of states $\zeta$.

Alas, for $\zeta$ sampled from a general distribution, this calculation is non-trivial and we are unable to present an analytic solution. We do believe it is quite an interesting problem to present a general analysis, or show that this task is intractable, but leave it as a question for future work. We can, however, analyze the performance of our estimator when $\zeta$ is a \emph{state $3$-design}, and recover the same performance as \cite{huang2020predicting}. Similarly to unitary designs discussed above, state designs are state ensembles whose first moments (properly defined) are identical to those of a Haar random quantum state. Similarly to unitary designs, it is possible to generate state designs efficiently using a polynomial amount of randomness. In fact, state designs are implied by unitary designs trivially, by applying the unitary design on a fixed state, say $\ket{0}$. Therefore, our requirement is no-stronger, and potentially weaker than that of previous works, even regardless of our complexity advantage.

We can analyze the performance of our method also when $\zeta$ is sampled from an \emph{approximate} state design with relative approximation factor $\epsilon$. This is defined analogously to approximate unitary designs mentioned above, but somewhat more simply since states are simpler than channels. We recover the results of \cite{schuster2024random} in this case. We go beyond the analysis in prior work and consider the setting of state designs with \emph{additive} approximation $\epsilon$. This means that the moments of the distribution are $\epsilon$-close to those of a random state, up to matrix $\ell_1$ distance $\epsilon$. We show that this notion of approximation, too, allows to recover meaningful classical shadows, so long as $\epsilon$ is sufficiently small: $\epsilon \ll 2^{-2n}$. We use this result as a segue to our analysis of \emph{computational} notions of classical shadows, described further below. We stress that our result for the additive setting is in principle \emph{incomparable} to the \cite{schuster2024random}-inspired result that uses relative distance. Indeed, similarly to the trade-off shown in \cite{schuster2024random,Brand_o_2016}, we observe that any additive $\epsilon$-approximate state $3$-design is also a relative $2^{3n}\epsilon$-approximate state $3$-design. However, our additive results only require $\epsilon \ll 2^{-2n}$ to be applicable (at least in some parameter regime), a regime in which the relative approximation is trivial. Furthermore, whereas our relative analysis (following \cite{schuster2024random}) works natively only for \emph{positive} observables, our additive analysis works for any observable. Whereas it is possible to map any observable $O$ into a positive version $O'$ so that the expectation of $O$ can be derived from that of $O'$, but this conversion may be computationally laborious.

The specific calculation aside, in order to convey the qualitative properties of our approach, let us try to ex post facto recover it from the principles of \emph{quantum teleportation}. Consider a setting where we wish to implement the \cite{huang2020predicting} approach and apply some unitary $U$ to a state $\rho$, and measure the outcome.
Instead, let us attempt to use teleportation as follows. We generate $n$ EPR pairs, apply $U$ to one end of the set of EPR pairs, and teleport $\rho$ into the other end.  Let $x,z$ be the values recovered in the teleportation process, and recall that in teleportation we apply a $X^x Z^z$ correction to the destination register. A simple calculation would show that right before the $X^x Z^z$ correction, the destination register is in the state $U^T X^x Z^z \rho$, i.e.\ we effectively applied $U'_{z,x}=U^T X^x Z^z$ to $\rho$.  Alas, we seem to have failed in our task. We were unable to apply our desired unitary $U$ and incurred a penalty of $X^x Z^z$ for values $z,x$ that cannot be predicted. Our first insight is that we do not need to apply a prescribed unitary in order to obtain classical shadows, as in the prevalent blueprint. Rather, we only need to be able to recover sufficient classical information that specifies the description of the unitary that had been applied. Furthermore, we observe that since our goal is to use ``sufficiently random'' unitaries $U$, the ``Pauli penalty'' of $X^x Z^z$ does not interfere with the statistical properties of the applied unitary. Indeed, the above protocol already works, when $U$ is a unitary design (or approximate design), and achieves constant online depth. We take a step further and consider running the above method, and stopping after applying $U$ to one end of the EPR pairs,  but before measuring $\rho$ and the EPR pair. We notice that at this point we are simply entangling $\rho$ with a pre-processed state that has been prepared in a prescribed manner using $U$. This gives rise to the generalization presented in Figure~\ref{fig:design-circuit} where we encapsulate the preparation of $\zeta$, and only require that it has appropriate statistical properties,  thus avoiding the requirement of the ``stronger'' unitary design primitive. This simplification also results in a potential reduction in the number of qubits used, since no EPR pairs or teleportation is actually needed,  as we just collect statistics about $\rho$ rather than teleporting it.

Indeed, the only other work we are aware of that contains a related insight is \cite{mcginley2023shadow}. There, classical shadows are generated by appending $\ket{0}$ ancilla qubits to $\rho$, and applying a grand unitary on both registers together, thus entangling them, and measuring the outcome. However, this work also suggests to use a computationally labor intensive unitary on the pair of registers, which is in contrast to our motivation to reduce the computational overhead on $\rho$.

\paragraph{The Computational Setting.} In our analysis above, we explained that using approximate state designs, we may obtain useful classical shadows. This brings about the natural question of whether it is possible to use states that are \emph{computationally indistinguishable} from being a state design. This is motivated by the literature on constructing pseudorandom quantum states \cite{JLS18}.

At first, the blueprint seems quite simple. Let us consider states that are indistinguishable from quantum state designs. We call them ``state pseudo-designs''. We wish to show that substituting approximate designs for pseudo-designs would still result in adequate shadows, namely that they can properly be used to predict the expected values of observables on the input state $\rho$.

In order to push this argument through, we would need to argue that if the pseudo-design leads to a significant skew in the estimated value of the observable, then this discrepancy can lead to \emph{an efficient distinguisher} between the pseudo-design and an actual state design. This a-priori seems unclear. The calculation of the estimated value is not necessarily an efficient process, and furthermore the computed value is a real number and one would need to find a way to convert it into a binary output for a distinguisher.

To address this challenge, we revisit the analysis of the statistical setting. The analysis is done by considering the random variable obtained from generating a single snapshot and evaluating the observable on this snapshot. Then, the expected value and variance of this random variable are properly bounded, and from these bounds it is possible to derive a guarantee about the closeness of the aggregated snapshot to the actual value of the observable on the input state.

We notice that whereas the calculation of the expected value and variance may be quite inefficient classically, there exists an efficient quantum circuit that computes these values. This circuit only needs access to $3$ copies of the state $\zeta$ and black-box access to evaluate the complex-conjugate of the observable in question, in addition to some other elementary gates. Recall that we mentioned above that our snapshot generation produces a correlation between $\rho$ and $\zeta^*_{z,x}$ which is related to the complex conjugate of $\zeta$. Therefore, if we wish to compute the expectation / variance using $\zeta$, we require the complex conjugate of the observable. Indeed, the computational complexity of an observable is identical to that of its complex conjugate. We also show that it is possible to translate the output of the (conjugate) observable into a binary guess on whether the input state is random or not.

It follows that if our pseudo design is ``sufficiently indistinguishable'' from an actual state design with respect to a set of distinguishers of size $T$, then the pseudo design can be used to estimate the values of all observables of complexity $T'$ where $T' = \Omega(T)$. Indeed, as in the statistical setting, the required indistinguishability is quite strong, distinguishers of size $T$ are required to have an advantage $\ll 2^{-2n}$ in order to obtain a meaningful result. We note that there exist constructions of quantum pseudorandom states that can be converted into pseudo-designs with such advantage, these are the so-called \emph{scalable} pseudorandom state constructions \cite{brakerski2020scalable,lu2023quantum}. However, we are unable to ascertain advantageous properties of using these specific constructions compared to the state of the art information theoretic constructions. Nevertheless, we view this path of investigation as quite intriguing. In particular since it suffices to construct pseudo-designs against limited classes of distinguishers in order to achieve meaningful results.

We note that to the best of our knowledge, computational versions of classical shadows, or of shadow tomography, have not been studied in the literature prior to our work and we view our work as opening a new research direction in the study of classical shadows. In addition to the above question of coming up with better pseudo-designs, one may also try to recover a computational version of relative approximate designs which could lead to potentially stronger results.

Finally, we notice that our results in the computational setting imply previously unknown limitations on constructions of pseudorandom quantum states. In particular, it can be shown quite straightforwardly that if $\zeta$ is a \emph{real valued} quantum state, namely all of its coordinates (as a $2^n$ dimensional state vector) are real, rather than complex, then it cannot be used for classical shadows. This implies, via our result, that such constructions cannot serve as pseudo designs with a parameter of $2^{-2n}$ or better. Indeed, real valued constructions are known \cite{BS19,GB23,GMW23} and indeed their analysis only achieves a parameter of $2^{-n}$. Our result therefore shows that it is not possible to achieve the ``scalability'' property while remaining real-valued.

\paragraph{Other Future Directions.} As we mentioned above, we are only able to analyze our new framework when instantiating the state $\zeta$ as an approximate design or a pseudo design. It may be plausible that other distributions of state with ``sufficient tomographic capacity'' may also yield meaningful results for classical shadows. It is possible that being close to a state design is not required at all and some milder property which is more easily achievable would be sufficient.

\section{Preliminaries}

\subsection{General Quantum Notations}

We use standard notation for quantum states and operators. If $\zeta$ is a unit vector in an $n$-qubit ($2^n$ dimensional) Hilbert space, then we let $\ket{\zeta}$ denote the quantum state corresponding to this vector.

We fix a universal set of quantum gates, for convenience we assume that this set is closed under complex conjugation. We say a circuit $C$ is of size $T$ if it consists of $T$ gates.

We consider ensembles of quantum states of the form $\{\ket{\psi_k}\}_{k \sim \cK}$, where $\cK$ is a distribution over classical strings, such that each string $k$ in the support of $\cK$ corresponds to a quantum state $\ket{\psi_k}$.

If $A,B$ are operators then $A \preceq B$ is notation for $B-A$ being positive semidefinite. If $v$ is a real-valued vector then $A^v = \bigotimes_j A^{v_j}$.

\subsection{Haar Measure}
We denote the Haar measure over $n$-qubit states by $\mu$. We use the following property of the Haar measure:
\begin{proposition}[see e.g.~\cite{gross2015partial} Lemma 6,7]
\label{prop:haar_two}
Let $A,B$ be matrices of size $2^n \times 2^n$. Then,
\begin{equation*}
        2^n (2^n+1) \mathrm{Tr}_2 \left( \left( \haarint{\phi}{2} \right) A \otimes B \right) = \tr(B)A + BA ~,
\end{equation*}
and thus
\begin{equation*}
    2^n (2^n+1) \tr \left( \left( \haarint{\phi}{2} \right) A \otimes B \right) = \tr(A) \tr(B) + \tr(AB) ~.
\end{equation*}
\end{proposition}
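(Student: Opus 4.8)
The plan is to reduce everything to the single well-known fact that the second moment of the Haar measure over pure states is the normalized projector onto the symmetric subspace. Writing $d = 2^n$ and letting $F$ denote the swap operator on the two-fold tensor product, $F = \sum_{i,j}\ket{i}\bra{j}\otimes\ket{j}\bra{i}$, I would first establish
\begin{equation*}
    2^n(2^n+1)\,\haarint{\phi}{2} = I + F .
\end{equation*}
This is where the representation-theoretic content sits, and it is the step I expect to be the main obstacle to a fully self-contained argument. The cleanest route is to observe that $M := \haarint{\phi}{2}$ commutes with $U^{\otimes 2}$ for every unitary $U$ (by invariance of the Haar measure under $\phi \mapsto U\phi$), so by Schur--Weyl duality $M$ lies in the span of $I$ and $F$; since moreover each $\ketbra{\phi}^{\otimes 2}$ is supported on the symmetric subspace, $M$ must be the normalized projector onto it, $M = P_{\mathrm{sym}}/\binom{d+1}{2}$ with $P_{\mathrm{sym}} = \tfrac12(I+F)$ and $\binom{d+1}{2} = \tfrac{d(d+1)}{2}$, which rearranges to the displayed identity.

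Given this, the first equation is a short computation of $\mathrm{Tr}_2((I+F)(A\otimes B))$ term by term. The identity term contributes $\mathrm{Tr}_2(A\otimes B) = \tr(B)\,A$. For the swap term I would expand $F(A\otimes B) = \sum_{i,j}(\ket{i}\bra{j}A)\otimes(\ket{j}\bra{i}B)$ and trace out the second factor, using $\tr(\ket{j}\bra{i}B) = \bra{i}B\ket{j}$; resumming $\sum_{i,j}\bra{i}B\ket{j}\,\ket{i}\bra{j} = B$ then yields $\mathrm{Tr}_2(F(A\otimes B)) = BA$. Adding the two contributions and multiplying through by $2^n(2^n+1)$ gives $\tr(B)A + BA$, as claimed.

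The second equation follows immediately by taking the full trace of the first, since $\tr(X) = \tr(\mathrm{Tr}_2(X))$ for any $X$: we get $\tr(\tr(B)A + BA) = \tr(A)\tr(B) + \tr(BA) = \tr(A)\tr(B) + \tr(AB)$. Alternatively one can derive it directly from $2^n(2^n+1)\,\tr(M(A\otimes B)) = \tr((I+F)(A\otimes B))$ using $\tr(A\otimes B) = \tr(A)\tr(B)$ together with the standard ``swap trick'' identity $\tr(F(A\otimes B)) = \tr(AB)$, itself a one-line index computation. Either way the only genuinely nontrivial input is the second-moment formula of the first paragraph; the remainder is bookkeeping.
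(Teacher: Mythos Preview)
Your argument is correct. The paper does not supply its own proof of this proposition; it simply quotes the identity as a known fact with a reference to \cite{gross2015partial}. The derivation you give---identifying $\haarint{\phi}{2}$ with $\Pi_{\mathrm{sym}}/\binom{d+1}{2} = (I+F)/(d(d+1))$ via Schur--Weyl (or just unitary invariance plus support on the symmetric subspace), then computing $\mathrm{Tr}_2((I+F)(A\otimes B))$ term by term---is the standard one, and in fact the paper itself invokes the underlying formula $\haarint{\phi}{t} = \Pi_{\mathrm{sym}^t}/\dim(\mathrm{sym}^t)$ later in the proof of Theorem~\ref{theorem:relative_snapshot}. So there is nothing to compare: you have filled in what the paper leaves as a citation, and done so along exactly the expected lines.
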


\subsection{Observables}
\begin{definition}[Observable]
An observable $O$ over $n$ qubits is an Hermitian matrix of size $2^n \times 2^n$. Recall that therefore $O$ can be decomposed as $O = \sum_i \alpha_i \Pi_i$, where $\Pi_i$ are orthogonal projector, and $\alpha_i$ are the real-valued eigenvalues of $O$, where for $i \not = j$, $\alpha_i \not = \alpha_j$ and $\Pi_i \Pi_j = 0$.
\end{definition}

\begin{definition}[Observable Measurement]
Let $O = \sum_i \alpha_i \Pi_i$ be an observable. A measurement according to $O$ is a quantum algorithm that given a quantum state $\rho$, returns $\alpha_i$ with probability $\tr[\Pi_i \rho]$. We say the observable has complexity $T$ if it can be implemented by a quantum circuit of size $T$.
\end{definition}
In circuit diagrams, we denote the application of a measurement according to observable $O$ as:
\begin{align*}
	\begin{quantikz}
		\meter{O}
	\end{quantikz}
\end{align*}

\subsection{Approximate State Designs}

A state $t$-design is an ensemble of states for which the $t$-th moment equals the $t$-th moment of a state taken from the Haar measure. Formally,

\begin{definition}[$t$-State Design]
    Let $t, \lambda\in \mathbb{N}$ and let $\cK$ be a distribution over $\binset^\secp$. An ensemble of states $\{ \ket{\psi_k} \}_{k \sim \cK}$ is a \emph{state $t$-design} if $$
    \haarint{\phi}{t}
    =
    \Ex_{k\leftarrow \cK} \left[\ketbra{\psi_k}^{\otimes t}\right]
    $$
\end{definition}

We will consider two definitions of approximate state $t$-designs.

\begin{definition}[Additive $\epsilon$-Approximate State $t$-Design]
\label{def:additive-state-design}
    Let $t, \lambda\in \mathbb{N}$ and let $\cK$ be a distribution over $\binset^\secp$. An ensemble of states $\{ \ket{\psi_k} \}_{k \sim \cK}$ is an \emph{additive} $\epsilon$-approximate state $t$-design if $$
    \norm{\haarint{\phi}{t}
    -
    \Ex_{k\leftarrow \cK} \left[\ketbra{\psi_k}^{\otimes t}\right]
    }_1 \le \epsilon
    $$
\end{definition}

\begin{definition}[Relative $\epsilon$-Approximate State $t$-Design]
    Let $t, \lambda\in \mathbb{N}$ and let $\cK$ be a distribution over $\binset^\secp$. An ensemble of states $\{ \ket{\psi_k} \}_{k \sim \cK}$ is a \emph{relative} $\epsilon$-approximate state $t$-design if $$
    (1-\epsilon)\haarint{\phi}{t}
    \preceq
    \Ex_{k\leftarrow \cK} \left[\ketbra{\psi_k}^{\otimes t}\right]
    \preceq
    (1+\epsilon)\haarint{\phi}{t}
    $$
\end{definition}

We also define the computational analogue of Definition~\ref{def:additive-state-design}:
\begin{definition}[State $t$-Pseudo-Design Generator]\label{def:pseudodesign}
	We say that a pair of polynomial-time quantum algorithms $(\keygen, \stategen)$ is an $(T,\epsilon)$-State $t$-Pseudo-Design Generator if the following holds:
	\begin{itemize}
		\item \textbf{Key Generation.}
		For all $n \in \bbN$, $\keygen(1^n)$ outputs a classical key $k$.
		
		\item \textbf{State Generation.}
		Given a key $k$ in the support of $\keygen(1^n)$, the algorithm $\stategen(1^n, k)$ outputs a $n$-qubit pure quantum state $\ket{\psi_k}$.
		
		\item \textbf{Pseudorandomness.}
		For any non-uniform quantum algorithm $A = \{ A_n \}_{n \in \bbN}$ (with quantum advice), where for any large enough $n$ the size of $A_n$ is at most $T(n)$, it holds that:
		$$
		\abs{\Pr[A_n \big( D_1 \big) = 1] -
			\Pr[A_n \big( D_2 \big) = 1]} \leq \epsilon(n) \enspace ,
		$$
		where the distributions $D_1, D_2$ are defined as follows.
		\begin{itemize}
			\item $D_1 : $
			Sample $k \gets \keygen(1^n)$, perform $t$ independent executions of $\stategen(1^n, k)$, and output $\ket{\psi_k}^{\otimes t}$.
			
			\item $D_2 : $
			Sample a random $n$-qubit quantum state $\ket{\psi} \gets \mu_n$, and output $t$ copies of it: $\ket{\psi}^{\otimes t}$.
		\end{itemize}
	\end{itemize}
\end{definition}

\section{State-Based Classical Shadows}
\begin{figure}
    \centering  
        \begin{mdframed}
        \centering
        \begin{quantikz}
            \lstick{{$R_1$ $\,\,\,\, \rho$}} & \qwbundle{n} &  \ctrl{1} & \gate{H^{\otimes n}} & \meter{z} \\
            \lstick{{$R_2$ $\,\,\,\,\zeta$}} & \qwbundle{n}  & \targ{} & & \meter{x}
        \end{quantikz}  
        \begin{enumerate}
        \item Denote the input register $R_1$.
            \item In register $R_2$, generate an auxiliary state $\zeta$ from a distribution of states $\statedist$.
            \item Apply $n$ CNOT gates qubit by qubit, with $R_1$ qubits as controls and $R_2$ qubits as targets.
            \item Apply $H^{\otimes n}$ on $R_1$.
            \item Measure $R_1$ and $R_2$ in the standard basis and denote the results by $z$ and $x$ respectively.
        \end{enumerate} 
    \end{mdframed}

    \caption{Description of a single experiment in the shadow tomography procedure.}
    \label{fig:design-circuit}
\end{figure}

We begin with presenting our new framework for generating a \emph{classical shadow} of an unknown quantum state. We consider a system containing two $n$-qubit registers, $R_1$ and $R_2$. The unknown input state $\rho$ is provided in $R_1$, and the register $R_2$ contains an auxiliary state $\ket{\zeta}$ with a known classical description \footnote{This is possible for example if the state is generated by a pseud-design generator, where the key $k$ if first generated. Note the full classical description the unit vector corresponding to $\ket{\zeta}$ has exponential length in $n$.}. We then run the quantum algorithm depicted in Figure~\ref{fig:design-circuit} to obtain classical results $x,z$. Let $\ket{\zeta_{x,z}} \coloneqq X^x Z^z \ket{\zeta}$. The final classical result, which following \cite{huang2020predicting} we denote as the \emph{classical shadow} $\hat{\rho}$\footnote{Sometimes this is also called the \emph{classical snapshot}.}, is the classical matrix\footnote{Note that when $\ket{\zeta_k}$ is determined by some classical key $k$, we could store $k,x,z$ and only calculate $(2^n+1)\ketbra{\zeta^*_{x,z}} - I$ when needed for observable estimation.}

$$
\hat{\rho} = (2^n+1)\ketbra{\zeta^*_{x,z}} - I \enspace .
$$
Each such classical shadow $\hat{\rho}$ induces a corresponding estimation $\tr(O \hat{\rho})$\footnote{Note that given a classical description of the matrix $\hat{\rho}$ we can calculate analytically the value $\tr(O \hat{\rho})$.} of $\tr(O\rho)$. We then aggregate multiple copies of $\hat{\rho}$ to estimate $\tr(O\rho)$ using median of means \cite{nemirovski1983medianmeans,JERRUM1986169}. Let $\hat{o}(L,K)$ be the median of $K$ means, each on $L$ estimates $\tr(O \hat{\rho})$. Clearly, we get different estimation guarantees depending on the state ensemble $\statedist$ the state $\ket{\zeta}$ is sampled from. In the following, for an observable $O$, denote as $O_0$ the traceless part of $O$, i.e. $O_0 =  O - \frac{\tr(O)}{2^n}\bbI$. We have the following theorems to bound the estimation error\footnote{These theorems are similar in spirit to \cite[Theorem 1]{huang2020predicting} and \cite[Theorem 6]{schuster2024random}, but with approximate state design instead of unitaries.}:

\begin{theorem}[Relative $\epsilon$-Approximate State Design Guarantees]
\label{theorem:relative-guarantees}
Let $O$ be a $2^n \times 2^n$ \emph{positive} Hermitian matrix, for $n \geq 2$, and let $\gamma, \delta \in (0,1)$ be accuracy parameters. Then, assuming that a \emph{relative} $\epsilon$-approximate $3$-state design is used in the shadow tomography process described in figure \ref{fig:design-circuit}, a collection of $LK$ independent classical shadows allows approximating:
\begin{align}
\Pr[\abs{\hat{o}(L,K) - \tr(O\rho)} \leq \gamma + 2\epsilon\tr(O)] \geq 1 - \delta \enspace,
\end{align}
for
$$
K = 2\log(2/\delta) \enspace \text{and} \enspace L = \frac{34}{\gamma^2} \left( 3\tr(O_0^2) + 10\epsilon \tr(O)^2 \right)
$$
\end{theorem}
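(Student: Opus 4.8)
The plan is to follow the classical-shadows template: analyze the mean and variance of a single snapshot, then boost to a high-probability accuracy guarantee via the median-of-means estimator. First I would compute the outcome distribution of the circuit in Figure~\ref{fig:design-circuit}. Since a CNOT layer followed by $H^{\otimes n}$ and a computational-basis measurement is exactly a Bell-basis measurement, outcome $(x,z)$ corresponds to projecting $R_1R_2$ onto $(X^xZ^z\otimes I)\ket{\Phi^+}$ for the maximally entangled state $\ket{\Phi^+}$. Using the teleportation identity $(A\otimes I)\ket{\Phi^+}=(I\otimes A^T)\ket{\Phi^+}$, together with $\ket{\zeta^*_{x,z}}=X^xZ^z\ket{\zeta^*}$ (legitimate since $X,Z$ are real matrices), a direct calculation gives
$$\Pr[x,z\mid\zeta,\rho]=\tfrac{1}{2^n}\bra{\zeta^*_{x,z}}\rho\ket{\zeta^*_{x,z}},$$
so that each snapshot contributes $\tr(O\hat\rho)=(2^n+1)\bra{\zeta^*_{x,z}}O\ket{\zeta^*_{x,z}}-\tr(O)$ with the probability above.

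Second, I would establish near-unbiasedness. Writing $\ket{\zeta^*_{x,z}}=P\ket{\zeta^*}$ with $P=X^xZ^z$, the quantity $\Ex_\zeta\sum_{x,z}\Pr[x,z]\bra{\zeta^*_{x,z}}O\ket{\zeta^*_{x,z}}$ is a functional of the second moment $\Ex_\zeta\ketbra{\zeta^*}^{\otimes2}$ evaluated against $(P^\dagger\rho P)\otimes(P^\dagger OP)$. Substituting the exact Haar second moment from Proposition~\ref{prop:haar_two} and carrying out the Pauli sum $\sum_P$ yields $\Ex[\tr(O\hat\rho)]=\tr(O\rho)$ exactly, so the snapshot is unbiased for an exact design. (One checks that the conjugate ensemble $\{\ket{\zeta^*}\}$ inherits the design property, since the Haar moment operator is real and complex conjugation preserves the PSD order.) For a relative $\epsilon$-design, the operator $(P^\dagger\rho P)\otimes(P^\dagger OP)$ is positive-semidefinite precisely because $O\succeq0$ and $\rho\succeq0$; hence the ordering $(1-\epsilon)M_2^{\mathrm{Haar}}\preceq M_2\preceq(1+\epsilon)M_2^{\mathrm{Haar}}$ transfers to a relative error on each trace, and summing gives $\abs{\Ex[\tr(O\hat\rho)]-\tr(O\rho)}\le\epsilon(\tr(O)+\tr(O\rho))\le2\epsilon\tr(O)$. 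This is the source of the $2\epsilon\tr(O)$ bias term, and the place where positivity of $O$ is essential.

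Third, and this is the crux, I would bound the single-snapshot variance. Since $\tr(\hat\rho)=1$, shifting $O\mapsto O_0$ only adds a constant and so $\mathrm{Var}[\tr(O\hat\rho)]=\mathrm{Var}[\tr(O_0\hat\rho)]$. The second moment $\Ex[(\tr(O\hat\rho))^2]$ expands into terms built from $\Ex_\zeta\sum_{x,z}\Pr[x,z]\bra{\zeta^*_{x,z}}O\ket{\zeta^*_{x,z}}^2$, a functional of the third moment $\Ex_\zeta\ketbra{\zeta^*}^{\otimes3}$ against $(P^\dagger\rho P)\otimes(P^\dagger OP)\otimes(P^\dagger OP)$ — this is exactly where the $3$-design hypothesis enters. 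I would plug in the Haar third moment $M_3^{\mathrm{Haar}}=\frac{1}{2^n(2^n+1)(2^n+2)}\sum_{\pi\in S_3}W_\pi$ (a fact I would state separately, in analogy with Proposition~\ref{prop:haar_two}), evaluate the six permutation contractions into products of $\tr(O),\tr(O\rho),\tr(O^2),\tr(O^2\rho)$, and again execute the Pauli sum. For an exact design this collapses to $\mathrm{Var}[\tr(O_0\hat\rho)]\le\tr(O_0^2)+2\tr(O_0^2\rho)\le3\tr(O_0^2)$, using $\tr(O_0^2\rho)\le\tr(O_0^2)$, which is the $3\tr(O_0^2)$ term. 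For the relative $\epsilon$-design I would keep the positive operator $(P^\dagger\rho P)\otimes(P^\dagger OP)\otimes(P^\dagger OP)$ intact (hence work with $O$, not $O_0$), transfer the relative ordering of $M_3$ into per-term relative errors, and bound every Haar value by a multiple of $\tr(O)^2$ via $\tr(O^2)\le\tr(O)^2$, $\tr(O\rho)\le\tr(O)$ and $\tr(O^2\rho)\le\tr(O^2)$; aggregating the corrections yields the extra $10\epsilon\tr(O)^2$.

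Finally, I would invoke the standard median-of-means concentration lemma: for an i.i.d.\ random variable with mean $\mu$ and variance $V$, the median of $K=2\log(2/\delta)$ batch-means, each averaging $L=\frac{34}{\gamma^2}V$ samples, lies within $\gamma$ of $\mu$ with probability at least $1-\delta$. Taking $\mu=\Ex[\tr(O\hat\rho)]$ and $V\le3\tr(O_0^2)+10\epsilon\tr(O)^2$, and combining with the bias bound through the triangle inequality, gives $\abs{\hat o(L,K)-\tr(O\rho)}\le\gamma+2\epsilon\tr(O)$ with probability at least $1-\delta$, as claimed. I expect the main obstacle to be the third-moment variance computation: correctly performing the $S_3$ contraction and the Pauli summation, and then — the genuinely delicate part — propagating the relative-$\epsilon$ perturbation of $M_3$ through all terms while exploiting positivity of $O$ and $\rho$ to convert PSD-order approximation into scalar trace bounds, even though the variance is most naturally expressed through the non-positive operator $O_0$.
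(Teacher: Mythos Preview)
Your proposal is correct and follows essentially the same route as the paper: Bell-measurement outcome probabilities, bias via the second Haar moment with positivity of $\rho\otimes O$, variance via the third moment with the $O_0$ shift, then median-of-means. The paper's execution differs only in two minor ways: it bounds the Haar correction terms by H\"older's inequality with $\|M_t^{\mathrm{Haar}}\|_\infty=1/\dim(\mathrm{sym}^t)$ rather than by your explicit $S_3$ contractions, and it makes explicit that expanding $O_0\otimes O_0=(O-\tfrac{\tr O}{2^n}I)^{\otimes 2}$ turns the single third-moment correction into a third-moment piece against $\rho\otimes O\otimes O$ \emph{plus} a second-moment piece against $\rho\otimes O$ (the identity factor traces out one copy), a step you allude to with ``work with $O$, not $O_0$'' but do not spell out.
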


\begin{theorem}[Additive $\epsilon$-Approximate State Design Guarantees]
\label{theorem:additive-guarantees}
Let $O$ be a $2^n \times 2^n$ Hermitian matrix, and let $\gamma, \delta \in (0,1)$ be accuracy parameters. Then, assuming that an \emph{additive} $\epsilon$-approximate $3$-state design is used in the shadow tomography process described in Figure~\ref{fig:design-circuit}, a collection of $LK$ independent classical shadows allows approximating:
\begin{align}
\Pr[\abs{\hat{o}(L,K) - \tr(O\rho)} \leq \gamma + (2^n+1)\varepsilon \norm{O}_{\infty} ] \geq 1 - \delta \enspace,
\end{align}
for
$$
K = 2\log(2/\delta) \enspace \text{and} \enspace L = \frac{34}{\gamma^2} \left(  3 \tr( O_0^2 ) +  3\epsilon\norm{O}^2_\infty(2^n+1)^2 \right)
$$
\end{theorem}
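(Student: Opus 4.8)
The plan is to follow the standard classical-shadows template: show that a single snapshot is an (almost) unbiased estimator of $\tr(O\rho)$, bound its variance, and then invoke a median-of-means concentration bound. The approximate flavour enters only in the two places where the ideal design moments get replaced by their additive $\epsilon$-approximations, and the whole point will be to control those replacements without losing spurious factors of $2^n$.

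First I would pin down the outcome distribution of the circuit in Figure~\ref{fig:design-circuit}. A direct calculation of the Bell measurement (the $n$ CNOTs, then $H^{\otimes n}$, then a computational measurement) shows that for fixed $\zeta$ the pair $(x,z)$ is obtained with probability $\Pr[x,z\mid\zeta]=2^{-n}\tr\!\big(\rho\,\ketbra{\zeta^*_{x,z}}\big)$, where $\ket{\zeta^*_{x,z}}=X^xZ^z\ket{\zeta^*}$. Writing $\sigma_{x,z}=\ketbra{\zeta^*_{x,z}}=P_{x,z}\ketbra{\zeta^*}P_{x,z}^\dagger$ with $P_{x,z}=X^xZ^z$, two structural facts drive everything. (i) The Pauli twirl $\sum_{x,z}P_{x,z}\,\omega\,P_{x,z}^\dagger=2^n\tr(\omega)\,\bbI$ handles the $\bbI$-part of $\hat{\rho}$ exactly, with no dependence on $\zeta$; in particular $\tr\hat{\rho}=(2^n+1)-2^n=1$ deterministically, so $\tr(O\hat{\rho})=\tr(O_0\hat{\rho})+2^{-n}\tr(O)$ and the variance is unchanged if I replace $O$ by its traceless part $O_0$. (ii) Complex conjugation is an isometry that fixes the (real) Haar moments, so the ensemble $\{\ket{\zeta^*}\}$ is again an additive $\epsilon$-approximate design with the same ideal moments, and tracing out one copy shows the additive $3$-design is in particular an additive $2$-design.

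For the expectation I would expand $\tr(O\hat{\rho})=(2^n+1)\tr(O\sigma_{x,z})-\tr(O)$. The quadratic-in-$\sigma_{x,z}$ part needs $\tilde M_2=\Ex_\zeta\ketbra{\zeta^*}^{\otimes2}$; for an exact design this is $\haarint{\phi}{2}$, and Proposition~\ref{prop:haar_two} together with $\sum_{x,z}(P_{x,z}\otimes P_{x,z})(\bbI+\mathsf{SWAP})(P_{x,z}\otimes P_{x,z})^\dagger=2^{2n}(\bbI+\mathsf{SWAP})$ gives $\Ex[\tr(O\hat{\rho})]=\tr(O\rho)$. For an additive $\epsilon$-design I replace $\haarint{\phi}{2}$ by $\tilde M_2$ with $\norm{\tilde M_2-\haarint{\phi}{2}}_1\le\epsilon$, push the (self-adjoint) Pauli twirl onto the observable, and invoke the key operator-norm bound $\big\|\sum_{x,z}(P_{x,z}\otimes P_{x,z})(\rho\otimes O)(P_{x,z}\otimes P_{x,z})^\dagger\big\|_\infty\le 2^n\norm{O}_\infty$. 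This bound follows from dominating the Hermitian $O$ by $\pm\norm{O}_\infty\bbI$ and then using the Pauli twirl of the PSD factor $\rho\otimes\bbI$, which sums to $2^n\bbI$. Hölder's inequality then yields a bias of at most $(2^n+1)\epsilon\norm{O}_\infty$. For the variance I would expand $\big(\tr(O\hat{\rho})\big)^2$; after the reduction to traceless $O$ the cross and constant terms vanish and the leading term is cubic in $\sigma_{x,z}$, governed by $\tilde M_3=\Ex_\zeta\ketbra{\zeta^*}^{\otimes3}$. For an exact $3$-design this is $\frac{1}{2^n(2^n+1)(2^n+2)}\sum_{\pi\in S_3}W_\pi$; each $W_\pi$ commutes with $P_{x,z}^{\otimes3}$, the Pauli sum collapses, and evaluating $\tr\big((\rho\otimes O\otimes O)W_\pi\big)$ cycle-by-cycle over $S_3$, combined with $\tr(O_0^2\rho)\le\norm{O_0}_\infty^2\le\tr(O_0^2)$, gives $\mathrm{Var}[\tr(O\hat{\rho})]\le 3\tr(O_0^2)$. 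The additive error is handled exactly as for the bias: replacing the ideal third moment by $\tilde M_3$ (with $\norm{\tilde M_3-\haarint{\phi}{3}}_1\le\epsilon$) and invoking the analogous bound $\big\|\sum_{x,z}P_{x,z}^{\otimes3}(\rho\otimes O\otimes O)(P_{x,z}^{\otimes3})^\dagger\big\|_\infty\le 2^n\norm{O}_\infty^2$ contributes at most $3\epsilon\norm{O}_\infty^2(2^n+1)^2$. Feeding $V=3\tr(O_0^2)+3\epsilon\norm{O}_\infty^2(2^n+1)^2$ and the bias into the median-of-means guarantee \cite{nemirovski1983medianmeans,JERRUM1986169} with $K=2\log(2/\delta)$ groups of $L=34V/\gamma^2$ snapshots places $\hat{o}(L,K)$ within $\gamma$ of $\Ex[\tr(O\hat{\rho})]$ with probability $1-\delta$, and the triangle inequality with the bias gives the claimed bound $\gamma+(2^n+1)\epsilon\norm{O}_\infty$.

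The main obstacle is the variance step. Extracting the $S_3$ Haar moment, verifying that the Pauli twirl collapses every permutation term, and reducing cleanly to traceless $O$ are all somewhat delicate bookkeeping, but the genuinely load-bearing point is keeping the additive error at the stated $2^{2n}\epsilon$ scale. A naive estimate $\norm{\sum_{x,z}P^{\otimes3}_{x,z}\Delta\,(P^{\otimes3}_{x,z})^\dagger}_1\le 2^{2n}\epsilon$ would force a superfluous factor of $2^n$; the operator-norm bound $\norm{\sum_{x,z}P^{\otimes3}_{x,z}(\rho\otimes O\otimes O)(P^{\otimes3}_{x,z})^\dagger}_\infty\le 2^n\norm{O}_\infty^2$, proved by pushing the twirl onto the PSD factor $\rho$, is precisely what saves this factor and makes the $\epsilon\ll 2^{-2n}$ regime meaningful.
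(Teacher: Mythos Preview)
Your proposal is correct and reaches the same conclusion, but the route you take to bound the additive error terms is genuinely different from the paper's.

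The paper does not use your H\"older/operator-norm trick. Instead it constructs explicit quantum circuits $\adv_{\Ex}$ and $\adv_{\mathrm{Var}}$ (Figure~\ref{fig:distinguishers}) that take copies of $\ket{\zeta}$, run the Bell measurement against $\rho$, apply $X^xZ^z$ and measure the \emph{conjugate} observable $O^*$, and output a biased coin. A short calculation shows that the acceptance probability of each distinguisher equals (up to an affine shift by $\tfrac12$ and a scaling by $2\norm{O}_\infty$ or $2\norm{O}_\infty^2$) precisely the trace expressions appearing in Propositions~\ref{prop:bias-bound} and~\ref{prop:variance-bound}. The additive design assumption then bounds the distinguishing advantage by $\epsilon/2$ via the operational meaning of the trace norm, which immediately gives the stated bias and variance constants (Lemma~\ref{lemma:additive-reduction}, Theorem~\ref{theorem:addative_snapshot}).

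Your argument---cycling the Pauli twirl onto $\rho\otimes O^{(\otimes)}$ and observing that $\sum_{x,z}P_{x,z}^{\otimes t}(\rho\otimes\cdots)(P_{x,z}^{\otimes t})^\dagger$ has operator norm $\le 2^n\norm{O}_\infty^{t-1}$ because the twirl collapses on the $\rho$ factor---is a clean and more elementary alternative for the statistical theorem. The paper's distinguisher route is less direct here but is exactly what is reused, essentially verbatim, to prove the computational analogue (Theorem~\ref{theorem:computational-guarantees}): since $\adv_{\Ex},\adv_{\mathrm{Var}}$ are efficient whenever $O$ is, the same inequalities hold against a pseudo-design. Your operator-norm argument would not transfer to that setting, because it never exhibits an efficient test.

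One small bookkeeping point: if you stay with $O_0$ throughout the variance step (as you indicate), the additive error comes out as $\epsilon\norm{O_0}_\infty^2(2^n+1)^2$, and $\norm{O_0}_\infty\le 2\norm{O}_\infty$ only gives a constant $4$ rather than the $3$ in the statement; to recover the paper's constant you would need to re-expand $O_0=O-2^{-n}\tr(O)\bbI$ and handle the resulting $2$-moment cross term separately, as in Proposition~\ref{prop:variance-bound}.
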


\begin{theorem}[$(T,\epsilon)$-State $t$-Pseudo-Design Guarantees]
\label{theorem:computational-guarantees}
Let $O$ be observable over $n$ qubits with complexity $t$. Then there exists a constant $c$ such that for $T = c \cdot \max(t, n)$, if $(T, \epsilon)$-State $3$-pseudo-design Generator is used in the shadow tomography process described in figure \ref{fig:design-circuit}, a collection of $LK$ independent classical shadows allows approximating:
\begin{align}
\Pr[\abs{\hat{o}(L,K) - \tr(O\rho)} \leq \gamma + 2(2^n+1)\varepsilon \norm{O}_{\infty} ] \geq 1 - \delta \enspace,
\end{align}
for
$$
K = 2\log(2/\delta) \enspace \text{and} \enspace L = \frac{34}{\gamma^2} \left(3 \tr( O_0^2 ) +  6\epsilon\norm{O}^2_\infty(2^n+1)^2 \right)
$$
\end{theorem}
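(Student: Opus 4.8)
The plan is to reduce the computational statement to the statistical machinery behind Theorem~\ref{theorem:additive-guarantees}, using two efficient distinguishers: one that controls the \emph{bias} and one that controls the \emph{variance} of a single-shot estimate. Recall that the median-of-means guarantee is driven entirely by two quantities attached to one execution of Figure~\ref{fig:design-circuit}: the single-shot mean $\mu_1 := \Ex[\tr(O\hat{\rho})]$ and second moment $\mu_2 := \Ex[\tr(O\hat{\rho})^2]$, where the expectation is over the sampled key $k$ (hence over $\ket{\zeta}$) and the Bell-measurement outcomes $x,z$; the relevant variance is $\mu_2-\mu_1^2$. For a genuine state $3$-design these recover $\mu_1 = \tr(O\rho)$ and $\mu_2-\mu_1^2 \le 3\tr(O_0^2)$, exactly as in the additive analysis. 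My goal is to show that replacing the design by a $(T,\epsilon)$-pseudo-design perturbs $\mu_1$ by at most $2(2^n+1)\epsilon\norm{O}_\infty$ and the variance by at most $6\epsilon\norm{O}_\infty^2(2^n+1)^2$, and then to feed these perturbed values into the identical median-of-means computation.

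The key structural observation is that $\mu_1$ and $\mu_2$ are fixed linear functionals of the second and third moment operators $\Ex_{k}[\ketbra{\psi_k}^{\otimes 2}]$ and $\Ex_{k}[\ketbra{\psi_k}^{\otimes 3}]$, and that each functional is \emph{efficiently samplable} from a few copies of $\ket{\zeta}$ with black-box access to $O^*$. Using the identity $\langle\zeta^*_{x,z}|O|\zeta^*_{x,z}\rangle = \langle\zeta_{x,z}|O^*|\zeta_{x,z}\rangle$ (valid since $O$ is Hermitian, so $O^*$ has the same real spectrum), the single-shot estimate is $\tr(O\hat{\rho}) = (2^n+1)\langle\zeta_{x,z}|O^*|\zeta_{x,z}\rangle - \tr(O)$. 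I would build a \textbf{bias distinguisher} $A^{\mathrm{bias}}$ that, given three copies of the challenge state and a copy of $\rho$ as quantum advice, runs the circuit of Figure~\ref{fig:design-circuit} on $\rho$ and one challenge copy to sample $(x,z)$ from exactly the correct distribution, applies $X^x Z^z$ to a second copy, measures it according to $O^*$ to obtain an eigenvalue $\alpha \in [-\norm{O}_\infty, \norm{O}_\infty]$, and outputs $1$ with probability $(\alpha+\norm{O}_\infty)/(2\norm{O}_\infty)$. Its acceptance probability is then the affine image $(\mu_1'+\norm{O}_\infty)/(2\norm{O}_\infty)$ of the mean overlap $\mu_1' := \Ex[\langle\zeta_{x,z}|O^*|\zeta_{x,z}\rangle]$, which is tied to the mean by $\mu_1 = (2^n+1)\mu_1' - \tr(O)$ and recovers $\mu_1=\tr(O\rho)$ on a Haar state. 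The \textbf{variance distinguisher} $A^{\mathrm{var}}$ is analogous but also uses the third copy: it performs two independent $O^*$-measurements $\alpha_1,\alpha_2$ on $X^x Z^z$-twirled copies and multiplies them, obtaining an unbiased estimate of $\langle\zeta_{x,z}|O^*|\zeta_{x,z}\rangle^2$, which it affinely folds into a $[0,1]$-valued bit. Since $O^*$ has complexity $t$ while the Bell-basis and Pauli operations cost $O(n)$, both circuits have size $O(\max(t,n))$, fixing the constant $c$ in $T = c\cdot\max(t,n)$.

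With the distinguishers in hand the reduction is immediate. On the Haar distribution $D_2$ the acceptance probabilities encode the exact-design values of $\mu_1'$ and of the squared-overlap moment, and the $(T,\epsilon)$-pseudorandomness of Definition~\ref{def:pseudodesign} bounds the change in each acceptance probability by $\epsilon$ when $D_2$ is swapped for the pseudo-design $D_1$. Because each bit is an affine image of its overlap moment with slope of order $1/\norm{O}_\infty$ (bias) and $1/\norm{O}_\infty^2$ (variance), the $\epsilon$ distinguishing slack becomes a perturbation of $\mu_1'$ of order $\norm{O}_\infty\epsilon$ and of the squared-overlap moment of order $\norm{O}_\infty^2\epsilon$. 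Multiplying back through the $(2^n+1)$ and $(2^n+1)^2$ normalizations relating these overlap moments to $\mu_1$ and $\mu_2$ yields the stated $2(2^n+1)\epsilon\norm{O}_\infty$ and $6\epsilon\norm{O}_\infty^2(2^n+1)^2$ perturbations; substituting the perturbed mean and variance into the same median-of-means concentration used for Theorem~\ref{theorem:additive-guarantees} then gives the claimed $K=2\log(2/\delta)$, $L$, and error bound.

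I expect the main obstacle to be the precise construction and accounting for $A^{\mathrm{var}}$: one must verify that three copies of $\ket{\zeta}$ genuinely suffice (one to generate $(x,z)$ by the Bell measurement against the advice copy of $\rho$, and two to supply independent $O^*$-measurements whose product is unbiased for the squared overlap), that a product of two signed eigenvalue outcomes in $[-\norm{O}_\infty^2,\norm{O}_\infty^2]$ can be folded into a single honest $[0,1]$-valued Bernoulli bit, and that the normalization tracked through this encoding produces exactly the $6\epsilon\norm{O}_\infty^2(2^n+1)^2$ slack and no larger factor. In particular, the factor $6$ (versus the $3$ of the additive case) should arise because the perturbation of $\mu_2$ combines the squared-overlap moment with a cross term in $\mu_1'$ weighted by $\tr(O)$, and both contributions must be shown to be of order $(2^n+1)^2\norm{O}_\infty^2\epsilon$. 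Closing the loop via Proposition~\ref{prop:haar_two}, namely confirming that the acceptance probabilities on a Haar state reproduce the exact-design mean and second moment, is the other place where care is needed.
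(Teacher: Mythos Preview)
Your proposal is correct and follows essentially the same approach as the paper: the paper packages the two distinguishers you describe (including the same affine encoding $p=\tfrac{1}{2}+\tfrac{\alpha}{2\norm{O}_\infty}$ and $p=\tfrac{1}{2}+\tfrac{\alpha_1\alpha_2}{2\norm{O}_\infty^2}$, with $\rho$ as quantum advice) into a standalone Lemma~\ref{lemma:additive-reduction}, then observes in Theorem~\ref{theorem:computational_snapshot} that both circuits have size $O(\max(t,n))$, and finally plugs the resulting single-shot bias and variance bounds into the median-of-means argument exactly as you outline. Your anticipation of where the factor $6$ comes from (a $2\epsilon\norm{O}_\infty^2(2^n+1)^2$ contribution from the three-copy term plus a $4\epsilon\norm{O}_\infty^2(2^n+1)^2$ contribution from the $\tr(O)$-weighted two-copy cross term) matches the paper's derivation precisely.
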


The proofs to Theorems \ref{theorem:relative-guarantees},\ref{theorem:additive-guarantees} and \ref{theorem:computational-guarantees} are given in Section~\ref{section:proofs-main-theorem} by plugging in the bounds on the bias and variance from Theorems \ref{theorem:relative_snapshot}, \ref{theorem:addative_snapshot} and \ref{theorem:computational_snapshot}.

\begin{remark}[Comparing Relative and Additive Approximations]
To illustrate the difference in applicability regimes of our theorems, and in particular the distinction between additive and relative approximation, we consider a family of states, and let $\epsilon$ (respectively $\epsilon'$) be its additive (respectively relative) approximation factor as an approximate $3$-design (note that $\epsilon \lesssim \epsilon' \lesssim 2^{3n} \epsilon$, see Appendix~\ref{apx:addrelstate}). If we use this approximate design for state-based classical shadows, which theorem should we use for the analysis? 

We perform the comparison by plugging the parameters into Theorems \ref{theorem:relative-guarantees}, \ref{theorem:additive-guarantees}. We note that one should consider both the bias in the expectation and the rate of convergence towards this bias, i.e.\ the number of samples required to be sufficiently close to the bias. Indeed, the bias is fully determined by $\epsilon, \epsilon'$ (and the properties of the observable), so if one method ``wins'' in this category then the other cannot match the approximation, no matter how many samples are used (this is in contrast to unbiased estimators such as \cite{huang2020predicting} where any approximation is achievable given sufficiently many samples). The rate of convergence determines the number of required samples, so it is possible to ``win'' in this category while ``losing'' in the bias.

Concretely, we notice that in the relative case (Theorem~\ref{theorem:relative-guarantees}), up to constants, the bias is $\epsilon' \tr(O)$ (we only consider positive observables) and the convergence rate is governed by $\epsilon \tr(O)^2$ (where we ignore terms that are the same between the two cases). Likewise, in the additive case the bias is governed by $\epsilon \cdot 2^n \norm{O}_{\infty}$ and the convergence rate is governed by $\epsilon \cdot 2^{2n}  \norm{O}_{\infty}^2$. Let $\alpha = \tr(O)/(2^n \norm{O}_{\infty})$. By definition $2^{-n} \le \alpha \le 1$, depending on the observable. 

In terms of bias, therefore, we need to compare $\epsilon' \alpha$ to $\epsilon$. Indeed, whenever $\epsilon \le 2^{-n} \epsilon'$, the additive analysis wins in the bias, and for some operators it wins (up to a constant) across the entire regime. In terms of the convergence, we need to compare $\epsilon' \alpha^2$ to $\epsilon$. Now the additive analysis wins on all observables only if $\epsilon \le 2^{-2n} \epsilon'$. So we see that there is a regime where the additive analysis wins across all observables in the bias, but loses over all observables in the convergence.

It follows that for concrete constructions of approximate designs, one should map out $\epsilon, \epsilon'$ and decide on which analysis to use accordingly.

\end{remark}

\begin{remark}[Note on Exact Snapshot Reconstruction]
Our Shadow Tomography procedure can be divided into two steps:
\begin{enumerate}
    \item Quantum step to obtain a classical snapshot from input state $\rho$ and auxiliary state $\ket{\zeta}$. This step results in measuring $z,x$, and defining a classical snapshot $\ketbra{\zeta_{x,z}}$.
    \item Classical step to reconstruct a classical shadow $\hat{\rho}$ from $\ketbra{\zeta_{x,z}}$. Specifically, we define
    \begin{align}
    \hat{\rho} = (2^n+1)\ketbra{\zeta^*_{x,z}} - I \enspace .
    \end{align}
\end{enumerate}
More generally, for any distribution $\statedist$ over pure states, we can define the map from $\rho$ to the expectation of $\ketbra{\zeta_{x,z}^*}$ (over sampling $\zeta$ from $\statedist$). By Proposition~\ref{prop:probability}, this can be shown to be:
\begin{align}
\cM_\statedist (\rho) &\coloneqq \Ex_{{\zeta} \leftarrow \statedist} \left[ \sum_{x,z \in \zon} \Pr \Big[x,z \Big| {\zeta} \Big] \ketbra{\zeta_{x,z}^*} \right] 
\\ &= \frac{1}{2^n} \sum_{x,z \in \zon} \mathrm{Tr}_{R_1} \left( \Ex_{{\zeta} \leftarrow \statedist} \left[  {\ketbra{\zeta_{x,z}^*}^{\otimes 2}}_{R_1,R_2} \right] (\rho_{R_1} \otimes I_{R_2})  \right) ~.
\end{align}
Suppose the channel $\cM_\statedist$ is invertible, and let $\cM_\statedist^{-1}$ be its inverse mapping. Note $\cM_\statedist^{-1}$ is a linear map, though not necessarily a quantum channel. Assuming we know how to compute the inverse mapping $\cM_\statedist^{-1}$, we can then define the classical shadow to be
\begin{align}
    \hat{\rho} \coloneqq \cM_\statedist^{-1} \left( \ketbra{\zeta_{x,z}^*} \right) ~.
\end{align}
In this case, the expectation of $\cM_\statedist^{-1}(\ketbra{\zeta_{x,z}^*})$, over sampling $\zeta$ from $\statedist$ and the measurement outcomes $x,z$, equals $\rho$. However, finding such $\cM_\statedist^{-1}$ for an arbitrary distribution $\statedist$ might be hard. Instead, we can consider a close distribution $\widetilde{S}$ of states, and consider $\cM_{\widetilde{\statedist}}^{-1}$. In our case, we take $\widetilde{S}$ to be the Haar random distribution over pure states. In this case, $\cM_{\widetilde{\statedist}}$ is the depolarizing channel, so its inverse mapping is
\begin{align}
  \cM_{\widetilde{\statedist}}^{-1}(A) = (2^n+1)A- \tr(A)I  
\end{align}
The intuition is that if the distributions $S$ and $\widetilde{S}$ are close, then we can apply $\cM_{\widetilde{\statedist}}^{-1}$ instead of $\cM_{\statedist}^{-1}$. This intuition is proved in our theorems for the case $\statedist$ forms an approximate (or pseudo) state design, and $\widetilde{\statedist}$ is the Haar random distribution over pure states.
\end{remark}

\section{Analysis}
In this section we bound the bias of the expectation of estimating an observable $O$, as well as its variance, for both types of approximations. Toward that end we have the following propositions:
\begin{proposition}[Bias Bound]
\label{prop:bias-bound}
Let $\statedist$ be a distribution over pure states, and let $O$ be an observable. For a quantum state $\rho$, let $\hat{\rho}$ be the random variable of the classical shadow with respect to state distribution $\statedist$. Then:
\begin{align}
\label{eq:prop_exp}
\abs{\Ex_{\zeta \gets \statedist} \left[\tr(O \hat{\rho}) \right] - \tr(O \rho)} 
 &\le
    \frac{2^n+1}{2^n}
    \left|  \tr \left( \sum_{x,z} \left( \Ex_{\zeta \gets \statedist} [ (\ketbra{\zeta_{x,z}^*})^{\otimes 2} ] - \haarint{\phi}{2} \right) \rho \otimes O \right) \right|
\end{align}
\end{proposition}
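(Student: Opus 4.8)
The plan is to compute $\Ex_{\zeta \gets \statedist}[\tr(O\hat{\rho})]$ in closed form, identify a Haar ``baseline'' against which it can be compared, and observe that the difference telescopes into exactly the right-hand side. First I would unfold the definition $\hat{\rho} = (2^n+1)\ketbra{\zeta^*_{x,z}} - I$ and use linearity of expectation and trace to write
\begin{equation*}
\Ex_{\zeta \gets \statedist}[\tr(O\hat{\rho})] = (2^n+1)\,\Ex_{\zeta, x, z}\big[\tr(O\ketbra{\zeta^*_{x,z}})\big] - \tr(O),
\end{equation*}
where the inner expectation is taken over sampling $\zeta \gets \statedist$ and then over the measurement outcomes $x,z$ produced by the circuit of Figure~\ref{fig:design-circuit}.

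Next, I would invoke Proposition~\ref{prop:probability} for the outcome probabilities $\Pr[x,z\mid\zeta]$, together with the tensor-power identity $\tr(\rho\ketbra{\zeta^*_{x,z}})\ketbra{\zeta^*_{x,z}} = \tr_{R_1}\big((\rho_{R_1}\otimes I_{R_2})\ketbra{\zeta^*_{x,z}}^{\otimes 2}\big)$, which absorbs the outcome probability into a second copy of the snapshot state. This is precisely the manipulation already recorded in the Note on Exact Snapshot Reconstruction, and it yields
\begin{equation*}
\Ex_{\zeta,x,z}\big[\tr(O\ketbra{\zeta^*_{x,z}})\big] = \frac{1}{2^n}\sum_{x,z\in\zon}\tr\!\left(\Ex_{\zeta\gets\statedist}\big[\ketbra{\zeta^*_{x,z}}^{\otimes 2}\big]\,(\rho\otimes O)\right),
\end{equation*}
where $\rho$ sits on register $R_1$ and $O$ on $R_2$. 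Substituting back gives a closed form for $\Ex[\tr(O\hat{\rho})]$ as $\tfrac{2^n+1}{2^n}$ times the summed trace, minus $\tr(O)$.

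The key step is to show that the Haar distribution gives an exactly unbiased estimator, so that $\tr(O\rho)$ admits the identical closed form with $\Ex_{\zeta\gets\statedist}[\ketbra{\zeta^*_{x,z}}^{\otimes 2}]$ replaced by $\haarint{\phi}{2}$. For this I would first note that, for every fixed $x,z$, the state $\ket{\zeta^*_{x,z}} = X^x Z^z\ket{\zeta^*}$ is Haar-distributed whenever $\zeta$ is: since $X$ and $Z$ are real, complex conjugation commutes past them, and the Haar measure is invariant both under complex conjugation and under the fixed unitary $X^x Z^z$; hence $\Ex_{\zeta\gets\mu}[\ketbra{\zeta^*_{x,z}}^{\otimes 2}] = \haarint{\phi}{2}$ for each of the $2^{2n}$ terms. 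I would then apply Proposition~\ref{prop:haar_two} with $A=\rho$, $B=O$, using $\tr(\rho)=1$, to get $\tr(\haarint{\phi}{2}(\rho\otimes O)) = \big(\tr(O)+\tr(O\rho)\big)/(2^n(2^n+1))$, sum over the $2^{2n}$ choices of $(x,z)$, and verify that $\tfrac{2^n+1}{2^n}\sum_{x,z}\tr(\haarint{\phi}{2}(\rho\otimes O)) - \tr(O)$ collapses exactly to $\tr(O\rho)$.

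Finally, subtracting the Haar identity from the closed form for $\Ex[\tr(O\hat{\rho})]$, the $\tr(O)$ terms cancel and the two summed traces combine by linearity into a single trace of the difference operator, giving
\begin{equation*}
\Ex_{\zeta\gets\statedist}[\tr(O\hat{\rho})] - \tr(O\rho) = \frac{2^n+1}{2^n}\,\tr\!\left(\sum_{x,z}\Big(\Ex_{\zeta\gets\statedist}\big[\ketbra{\zeta^*_{x,z}}^{\otimes 2}\big] - \haarint{\phi}{2}\Big)(\rho\otimes O)\right);
\end{equation*}
taking absolute values yields the claim (in fact with equality). The main obstacle I anticipate is bookkeeping: keeping the two registers straight so that $\rho$ and $O$ land on the correct tensor factors through the partial-trace identity, and correctly justifying the Haar-invariance of $\ket{\zeta^*_{x,z}}$ so that the baseline computation produces an \emph{exactly} unbiased estimator rather than merely an approximate one.
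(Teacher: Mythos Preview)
Your proposal is correct and follows essentially the same approach as the paper: expand $\Ex[\tr(O\hat\rho)]$ via Proposition~\ref{prop:probability} into the form $\tfrac{2^n+1}{2^n}\sum_{x,z}\tr(\Ex_\zeta[\ketbra{\zeta^*_{x,z}}^{\otimes 2}](\rho\otimes O))-\tr(O)$, establish the identical formula for $\tr(O\rho)$ using Proposition~\ref{prop:haar_two}, and subtract. Your justification of the Haar baseline via invariance of $\ket{\zeta^*_{x,z}}$ is slightly more explicit than the paper's (which simply sums the Haar identity over $x,z$), and your observation that the bound is in fact an equality is correct.
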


\begin{proposition}[Variance Bound]
\label{prop:variance-bound}
Let $\statedist$ be a distribution over pure states, and let $O$ be an observable. For a quantum state $\rho$, let $\hat{\rho}$ be the random variable of the classical shadow with respect to state distribution $\statedist$. Then:

\begin{align}
\label{eq:var-final-bound}
    \text{\emph{Var}}\left[\tr(O \hat{\rho}) \right] 
    &\leq 3 \tr( O_0^2 ) \\
    &\,\,\,\,\, + \Bigg| \frac{(2^n+1)^2}{2^n} \sum_{x,z} \bigg[ \tr \left( \left( \Ex_{\zeta \gets \statedist} [ (\ketbra{\zeta_{x,z}^*})^{\otimes 3} ] - \int \ketbra{\phi}^{\otimes 3} d\mu(\phi) \right) \rho \otimes O \otimes O \right) \bigg] \Bigg|  \nonumber \\
   &\,\,\,\,\, +\Bigg|\frac{2 (2^n+1)^2\tr(O)}{2^{2n}} \sum_{x,z}\bigg[ \tr \left( \left( \Ex_{\zeta \gets \statedist} [ (\ketbra{\zeta_{x,z}^*})^{\otimes 2} ] - \int \ketbra{\phi}^{\otimes 2} d\mu(\phi) \right) \rho \otimes O \right) \bigg] \Bigg| \nonumber
\end{align}
\end{proposition}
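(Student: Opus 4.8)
The plan is to collapse the variance into a single third-moment error term and then split that term into the two pieces in the statement. First I would use that $\tr(\hat{\rho})=1$, so that $\tr(O\hat{\rho})=\tr(O_0\hat{\rho})+\tr(O)/2^n$ differs from $\tr(O_0\hat{\rho})$ only by the deterministic constant $\tr(O)/2^n$. Hence $\mathrm{Var}[\tr(O\hat{\rho})]=\mathrm{Var}[\tr(O_0\hat{\rho})]$, and I may work with the traceless $O_0$ throughout; the payoff is that $\tr(O_0\hat{\rho})=(2^n+1)\tr(O_0\ketbra{\zeta_{x,z}^*})$ carries no additive constant, which is what makes the later bookkeeping close. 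I would then bound the variance by the bare second moment, $\mathrm{Var}[\tr(O_0\hat{\rho})]\le \Ex[\tr(O_0\hat{\rho})^2]$, discarding the nonnegative $(\Ex[\tr(O_0\hat{\rho})])^2$. This step is the key subtlety: keeping the full variance would force me to expand $(\Ex[\cdot])^2$ and would leave a cross term proportional to $\tr(O_0\rho)$ that does \emph{not} appear in the claim, whereas simply dropping it is legitimate and, as the ideal computation below shows, loses nothing.

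Using the measurement-probability formula (Proposition~\ref{prop:probability}), $\Pr[x,z\mid\zeta]=\tfrac{1}{2^n}\tr(\rho\ketbra{\zeta_{x,z}^*})$, averaging $\tr(O_0\hat{\rho})^2$ over $x,z$ and over $\zeta\gets\statedist$ folds the extra factor of $\ketbra{\zeta_{x,z}^*}$ into the two already present, yielding a single third-moment object
\[
\Ex[\tr(O_0\hat{\rho})^2]
=
\frac{(2^n+1)^2}{2^n}\sum_{x,z}\tr\!\left(
\Ex_{\zeta\gets\statedist}\!\left[(\ketbra{\zeta_{x,z}^*})^{\otimes 3}\right]
\rho\otimes O_0\otimes O_0
\right).
\]
I would add and subtract the Haar third moment $\haarint{\phi}{3}$ inside the sum, splitting this into an \emph{ideal} term (with $\haarint{\phi}{3}$ replacing $\Ex_\zeta[\cdot]$) and an error term $E$ carrying the moment difference.

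For the ideal term I would invoke the third-moment analogue of Proposition~\ref{prop:haar_two}, i.e.\ expand $\haarint{\phi}{3}$ over $S_3$ with normalization $\tfrac{1}{2^n(2^n+1)(2^n+2)}$. Because $\tr(O_0)=0$, the three permutations that leave an $O_0$ factor untraced vanish, and the ideal term collapses to $\tfrac{2^n+1}{2^n+2}\big(\tr(O_0^2)+2\tr(O_0^2\rho)\big)$. Since $\tr(O_0^2\rho)\le\norm{O_0}_{\infty}^2\le\tr(O_0^2)$ and $\tfrac{2^n+1}{2^n+2}<1$, this is at most $3\tr(O_0^2)$, recovering the \cite{huang2020predicting} bound (and essentially tight, e.g.\ for $\rho$ pure with $O=\ketbra{\psi}$).

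Finally I would match the two error terms by expanding $O_0\otimes O_0=(O-\tfrac{\tr(O)}{2^n}I)^{\otimes 2}$ inside $E$. The $O\otimes O$ piece is exactly the stated third-moment error term in the full observable $O$. Each cross piece $O\otimes I$ and $I\otimes O$ reduces, by tracing out the identity factor (using $\mathrm{Tr}_3(\Ex_\zeta[(\cdot)^{\otimes 3}])=\Ex_\zeta[(\cdot)^{\otimes 2}]$ and the symmetry of the state in its last two tensor factors), to the second-moment difference $\sum_{x,z}\tr\big((\Ex_\zeta[(\ketbra{\zeta_{x,z}^*})^{\otimes 2}]-\haarint{\phi}{2})\,\rho\otimes O\big)$, producing the claimed second error term with coefficient $\tfrac{2(2^n+1)^2\tr(O)}{2^{2n}}$. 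The $I\otimes I$ piece vanishes after summing over $x,z$, since $\sum_{x,z}\ketbra{\zeta_{x,z}^*}=2^n I=\sum_{x,z}\haarint{\phi}{1}$. A single triangle inequality then gives $|E|$ bounded by the sum of the two stated error terms. I expect the main obstacle to be the ideal-term computation: getting the $S_3$ expansion and normalization of the third Haar moment exactly right, and using $\tr(O_0)=0$ carefully so that the surviving constant lands at precisely $3$; the partial-trace bookkeeping in the last step is routine but must track which tensor factor holds $\rho$, $O$, or $I$, and must confirm the vanishing of the $I\otimes I$ contribution.
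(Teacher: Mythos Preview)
Your proposal is correct and follows essentially the same route as the paper: reduce to the traceless part $O_0$, drop the square of the mean, rewrite the second moment as a third-moment trace via Proposition~\ref{prop:probability}, compare to the Haar case (which the paper handles by citing \cite{huang2020predicting} rather than doing the $S_3$ expansion you sketch), and then expand $O_0\otimes O_0$ back into $O$ to isolate the two error terms, with the $I\otimes I$ contribution cancelling because the Pauli twirl is a $1$-design. The only cosmetic difference is the order of operations: the paper expands $O_0\otimes O_0$ first and then adds and subtracts the full Haar expression, whereas you add and subtract the Haar third moment first and expand only inside the error term; these are the same computation.
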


In order to prove the above propositions we need the following technical proposition:
\begin{proposition}
\label{prop:probability}
Let $\zeta$ be a unit vector corresponding to a pure state $\ket{\zeta}$. Let $x,z$ be the measurement outcomes when running the shadow tomography process described in figure \ref{fig:design-circuit} with input state $\rho$ and auxiliary state $\ket{\zeta}$. Then:
\begin{align}
    \Pr \Big[x,z \Big| \zeta \Big] = \frac{1}{2^n} \bra{\zeta_{x,z}^*}\rho \ket{\zeta_{x,z}^*} \enspace.
\end{align}
\end{proposition}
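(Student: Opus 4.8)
The plan is to compute the outcome probability directly via the Born rule, propagate the two measurement projectors backward through the circuit unitary, and then independently expand $\bra{\zeta_{x,z}^*}\rho\ket{\zeta_{x,z}^*}$ in the computational basis and match the two expressions term by term. Write $U = H^{\otimes n}_{R_1}\cdot\cnot^{\otimes n}$ for the operation applied before the standard-basis measurement (the qubit-wise $\cnot$ gates with $R_1$ controlling $R_2$, followed by $H^{\otimes n}$ on $R_1$). Since the measurement reads $z$ from $R_1$ and $x$ from $R_2$, the Born rule together with the adjoint (``move the projector through $U$'') identity gives
\begin{align*}
\Pr[x,z\mid\zeta] = \tr\!\left[(\ketbra{z}_{R_1}\otimes\ketbra{x}_{R_2})\,U(\rho\otimes\ketbra{\zeta})U^\dagger\right] = \bra{\Psi_{z,x}}(\rho\otimes\ketbra{\zeta})\ket{\Psi_{z,x}},
\end{align*}
where $\ket{\Psi_{z,x}} \coloneqq U^\dagger\big(\ket{z}_{R_1}\otimes\ket{x}_{R_2}\big)$.

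First I would make $\ket{\Psi_{z,x}}$ explicit. Using $U^\dagger = \cnot^{\otimes n}\cdot H^{\otimes n}_{R_1}$ (both $H$ and $\cnot$ being self-inverse), applying $H^{\otimes n}$ to $\ket{z}$ gives $\tfrac{1}{\sqrt{2^n}}\sum_w (-1)^{z\cdot w}\ket{w}_{R_1}$, and each $\cnot$ then sends $\ket{w}_{R_1}\ket{x}_{R_2}\mapsto\ket{w}_{R_1}\ket{x\oplus w}_{R_2}$. Substituting the resulting $\ket{\Psi_{z,x}} = \tfrac{1}{\sqrt{2^n}}\sum_w(-1)^{z\cdot w}\ket{w}_{R_1}\ket{x\oplus w}_{R_2}$ into the quadratic form, and writing $\rho$ through its matrix entries $\bra{w'}\rho\ket{w}$, collapses the $R_2$ overlaps against $\ket{\zeta}$ and yields
\begin{align*}
\Pr[x,z\mid\zeta] = \frac{1}{2^n}\sum_{w,w'}(-1)^{z\cdot(w\oplus w')}\,\bra{w'}\rho\ket{w}\,\langle x\oplus w'|\zeta\rangle\,\langle\zeta|x\oplus w\rangle.
\end{align*}

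Then I would expand the right-hand side of the claim from the definition $\ket{\zeta_{x,z}} = X^x Z^z\ket{\zeta}$. A short amplitude computation gives $\langle y|\zeta_{x,z}\rangle = (-1)^{z\cdot x}(-1)^{z\cdot y}\langle y\oplus x|\zeta\rangle$, so the conjugate state satisfies $\langle y|\zeta_{x,z}^*\rangle = (-1)^{z\cdot x}(-1)^{z\cdot y}\overline{\langle y\oplus x|\zeta\rangle}$. Plugging this into $\bra{\zeta_{x,z}^*}\rho\ket{\zeta_{x,z}^*}=\sum_{w,w'}\overline{\langle w'|\zeta_{x,z}^*\rangle}\,\langle w|\zeta_{x,z}^*\rangle\,\bra{w'}\rho\ket{w}$, the two $(-1)^{z\cdot x}$ factors square to $1$, the residual phase is exactly $(-1)^{z\cdot(w\oplus w')}$, and the overlaps become $\langle x\oplus w'|\zeta\rangle\,\langle\zeta|x\oplus w\rangle$, reproducing the sum above up to the prefactor $1/2^n$, which is precisely the claim.

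I expect the only delicate point to be the complex-conjugation bookkeeping: tracking exactly where the conjugation lands when forming $\ket{\zeta_{x,z}^*}$ and its bra, and verifying it matches the conjugate arising from $\langle\zeta|x\oplus w\rangle$ in the back-propagated expression. This is also the conceptual reason the \emph{conjugate} state $\ket{\zeta^*_{x,z}}$ (rather than $\ket{\zeta_{x,z}}$ itself) appears in the snapshot. The $\cnot$ and $H^{\otimes n}$ computations, the self-inverse facts, and the reindexing $w\oplus x = x\oplus w$ are all routine, and everything holds for arbitrary (possibly mixed) $\rho$ since only its matrix entries are used.
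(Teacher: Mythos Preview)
Your proposal is correct and essentially the same as the paper's argument: both are direct Born-rule computations expanded in the computational basis, differing only in organization (you pull the projector $\ket{z,x}$ back through $U^\dagger$ to obtain the Bell state $\ket{\Psi_{z,x}}$ and then match against a separate expansion of $\bra{\zeta_{x,z}^*}\rho\ket{\zeta_{x,z}^*}$, whereas the paper pushes forward, inserts $Z^z X^x\ketbra{\zeta_{x,z}}X^x Z^z$ for $\ketbra{\zeta}$, simplifies the phases, and recognizes the result as $\tfrac{1}{2^n}\tr(\rho\,\ketbra{\zeta_{x,z}}^T)=\tfrac{1}{2^n}\tr(\rho\,\ketbra{\zeta_{x,z}^*})$). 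Your conjugation bookkeeping checks out.
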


\begin{proof}[Proof of Proposition~\ref{prop:probability}]
Direct calculation:
\begin{align}
    \Pr \Big[x,z \Big| \zeta \Big] &= \bra{z,x} (H^{\otimes n} \otimes I) (\mathrm{CNOT}) (\rho \otimes \ketbra{\zeta}) (\mathrm{CNOT}^\dagger) (H^{\dagger \otimes n} \otimes I) \ket{z,x}
    \\ &= 
    \sum_{z', z''} \frac{1}{2^n} (-1)^{z\cdot z' + z\cdot z''}
    \bra{z',x} (\mathrm{CNOT}) (\rho \otimes \ketbra{\zeta}) (\mathrm{CNOT}^\dagger) \otimes I) \ket{z'',x}
    \\ &= 
    \sum_{z', z''} \frac{1}{2^n} (-1)^{z\cdot (z' \oplus z'')}
    \bra{z',x \oplus z'} (\rho \otimes \ketbra{\zeta}) \ket{z'',x \oplus z''}
    \\ &= 
    \sum_{z', z''} \frac{1}{2^n} (-1)^{z\cdot (z' \oplus z'')}
    \bra{z',x \oplus z'} (\rho \otimes Z^z X^x\ketbra{\zeta_{x,z}}X^x Z^z) \ket{z'',x \oplus z''}
    \\ &= 
    \sum_{z', z''} \frac{1}{2^n} (-1)^{z\cdot (z' \oplus z'') + z\cdot(x\oplus z') + z\cdot(x\oplus z'')}
    \bra{z',x \oplus z'} (\rho \otimes X^x \ketbra{\zeta_{x,z}}X^x) \ket{z'',x \oplus z''}
    \\ &= 
    \sum_{z', z''} \frac{1}{2^n} 
    \bra{z', z'} (\rho \otimes \ketbra{\zeta_{x,z}}) \ket{z'', z''}
    \\ &= 
    \frac{1}{2^n} \tr(\rho (\ketbra{\zeta_{x,z}})^T)
    \\ &= 
    \frac{1}{2^n} \tr(\rho \ketbra{\zeta_{x,z}^*})
        \\ &= 
    \frac{1}{2^n} \bra{\zeta_{x,z}^*}\rho \ket{\zeta_{x,z}^*}~,
\end{align}
\end{proof}

\begin{proof}[Proof Of Proposition~\ref{prop:bias-bound}]

By Proposition~\ref{prop:probability}, the linearity and circularity of the trace, and the linearity of the expectation, we have that:
\begin{align}
        \Ex_{\zeta \gets \statedist} \left[\tr(O {\hat{\rho}}) \right]
    &=
    \tr \left( O \cdot \left(\Ex_{\zeta \gets \statedist}  \Ex_{x,z} \left((2^n+1)\ketbra{\zeta_{x,z}^*} - \bbI \right) \right) \right)
    \\&=
    \tr \left( O \cdot \left((2^n+1)\Ex_{\zeta \gets \statedist} \sum_{x,z} \Pr \Big[x,z \Big| \zeta \Big] \ketbra{\zeta_{x,z}^*} - \bbI \right)  \right)
    \\&=
    \tr \left( O \cdot \left((2^n+1) \Ex_{\zeta \gets \statedist} \sum_{x,z} \ketbra{\zeta_{x,z}^*} \cdot \frac{1}{2^n}\bra{\zeta_{x,z}^*} \rho \ket{\zeta_{x,z}^*} - \bbI \right)  \right)
    \\&=
    \tr \left( (2^n+1) O \cdot  \sum_{x,z}  \Ex_{\zeta \gets \statedist} \ketbra{\zeta_{x,z}^*} \cdot \frac{1}{2^n}\bra{\zeta_{x,z}^*} \rho \ket{\zeta_{x,z}^*} -O \cdot \bbI \right)
    \\&=
     \label{eq:expectation-trace}
    \frac{(2^n+1)}{2^n} \tr \left( \sum_{x,z}  \Ex_{\zeta \gets \statedist} \ketbra{\zeta_{x,z}^*}^{\otimes 2} \cdot \rho \otimes O \right) - \tr(O) ~.
\end{align}

Now, by Proposition~\ref{prop:haar_two}, we have that:
\begin{equation}
\label{eq:trace-haar}
 2^n (2^n+1) \tr \left( \left( \haarint{\phi}{2} \right) \rho \otimes O \right) = \tr(\rho) \tr(O) + \tr(O\rho)
\end{equation}
And since $\tr(\rho) = 1$, by summing over $x,z \in \zo^n$, we obtain that:

\begin{align}
\label{eq:expectation-trace-2}
    \tr(O\rho)
    =
    \frac{(2^n+1)}{2^n}  \tr \left( \sum_{x,z}  \haarint{\phi}{2} \cdot (\rho \otimes O) \right) - \tr(O) ~,
\end{align}

The proposition then follows by substituting the terms in the left hand side of equation~\ref{eq:prop_exp} according to equations~\ref{eq:expectation-trace} and \ref{eq:expectation-trace-2}, the linearity of the trace, and the triangle inequality.
\end{proof}

\begin{proof}[Proof of Proposition~\ref{prop:variance-bound}]
Let $O_0 = O-\frac{\tr(O)}{2^n}\bbI$ be the traceless part of $O$. Note the since $O$ and $O_0$ differ only by an identity component, the variance of estimating $O$ and $O_0$ is the same. For any matrix $A$ and scalar $\alpha$, it holds that

\begin{align}
    \tr((\alpha A-\bbI)O_0) = \tr(\alpha A \cdot O_0)
\end{align}

Therefore,
\begin{align}
	\text{Var} \left[ \tr (O\hat{\rho}) \right] &= \text{Var} \left[ \tr (O_0\hat{\rho}) \right] \\
 &= \Ex \left[ \tr (O_0\hat{\rho})^2\right] - \Ex \left[ \tr (O_0\hat{\rho})\right]^2 \\
 & \le \Ex \left[ \tr (O_0\hat{\rho})^2\right] \\
 &=  \Ex_{\zeta \leftarrow \statedist} \left[ \sum_{x,z} \Pr \Big[x,z \Big| \zeta \Big] \tr \left(O_0 \cdot \left((2^n+1)\ketbra{\zeta_{x,z}^*} - \bbI \right) \right)^2 \right] \\
 &=  \Ex_{\zeta \leftarrow \statedist} \left[ \sum_{x,z} \frac{1}{2^n} \bra{\zeta_{x,z}^*} \rho \ket{\zeta_{x,z}^*} \tr \left(O_0 \cdot \left((2^n+1)\ketbra{\zeta_{x,z}^*} \right) \right)^2 \right] \\
 \label{eq:var-eq-1}
	& = \frac{(2^n+1)^2}{2^n}  \tr \left( \Ex_{\zeta \gets \statedist} \left[ \sum_{x,z}  (\ketbra{\zeta_{x,z}^*}))^{\otimes 3} \right] \rho \otimes O_0 \otimes O_0 \right) \enspace .
\end{align}

Substituting back $O_0 = O-\frac{\tr(O)}{2^n}\bbI$ we get
\begin{align}
\label{eq:var-eq-2}
 = \frac{(2^n+1)^2}{2^n} \Bigg[ \tr \left( \Ex_{\zeta \gets \statedist} \left[ \sum_{x,z} (\ketbra{\zeta_{x,z}^*})^{\otimes 3} \right] \rho \otimes O \otimes O \right)  \\   \,\,\,\,\,\,\,\,\,\,\,\,\,\,\,\,\,\,\,\,\,\,\,\,\,\,\,\,\,\,\,\,\,\,\,\,\,\, -   \frac{2 \tr(O)}{2^n}  \tr \left( \Ex_{\zeta \gets \statedist} \left[ \sum_{x,z} (\ketbra{\zeta_{x,z}^*})^{\otimes 2} \right] \rho \otimes O \right) \nonumber \\ 
\,\,\,\,\,\,\,\,\,\,\,\,\,\,\,\,\,\,\,\,\,\,\,\,\,\,\,\,\,\,\,\,\,\,\,\,\,\,
   + \frac{\tr(O)^2}{2^{2n}}  \tr \left( \Ex_{\zeta \gets \statedist} \left[ \sum_{x,z} (\ketbra{\zeta_{x,z}^*}) \right] \rho \right) \Bigg] ~.  \nonumber
\end{align}

Now, note that when $\statedist$ is the Haar measure over the set of $n$ qubits, which is invariant under rotations, we have that:
\begin{align}
\frac{(2^n+1)^2}{2^n}  \tr \left( \Ex_{\zeta \gets \statedist} \left[ \sum_{x,z}  (\ketbra{\zeta_{x,z}^*}))^{\otimes 3} \right] \rho \otimes O_0 \otimes O_0 \right) \\
= 2^n(2^n+1)^2 \tr \left( \haarint{\phi}{3} (\rho \otimes O_0 \otimes O_0 ) \right) \nonumber
\end{align}
In turn, it was shown in \cite{huang2020predicting}[Proposition 1] that 
\begin{align}
\label{eq:var-eq-3}
    2^n(2^n+1)^2  \tr \left( \haarint{\phi}{3} (\rho \otimes O_0 \otimes O_0) \right) 
    \leq
    3 \tr( O_0^2 )
\end{align}
Next, for any state $\ketbra{\zeta^*}$ on $n$ qubits we have that 
\begin{align}
    \sum_{x,z} \ketbra{\zeta_{x,z}^*} = \sum_{x,z} X^xZ^z \ketbra{\zeta^*} Z^zX^x = 2^n\bbI \enspace,
\end{align}
so the last term is equal for any distribution over states $\statedist$ \footnote{A random Pauli twirl is a 1-unitary design.}. Finally, Equation \ref{eq:var-final-bound} follows by adding and subtracting
\begin{align}
    2^n(2^n+1)^2 \tr \left( \int \ketbra{\phi}^{\otimes 3} d\mu(\phi) ( \rho \otimes O_0 \otimes O_0 ) \right)
\end{align}
from equation \ref{eq:var-eq-2} and using the bound in equation \ref{eq:var-eq-3}. The final result is achieved by the triangle inequality.

\end{proof}

\subsection{Relative Approximate Designs}

\begin{theorem}s
\label{theorem:relative_snapshot}
Let $\{\ket{\psi_k}\}_{k \sim \cK}$ be an $\epsilon$-approximate \emph{relative} state 3-design.
Then, any \emph{positive} observable $O$ over $n \geq 2$ qubits can be estimated using a single classical snapshot with bias
\begin{equation}
	\left| \Ex \left[\tr( O \hat{\rho} ) \right] - \tr( O \rho ) \right| \leq 2 \varepsilon \tr(O),
\end{equation}
and variance
\begin{equation}
	\text{\emph{Var}} \left[ \tr ( O \hat{\rho} ) \right] \leq 3\tr( O_0^2 ) + 10 \varepsilon \tr( O )^2 ~.
\end{equation}
\end{theorem}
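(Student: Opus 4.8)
The plan is to invoke the two reductions already in hand, Proposition~\ref{prop:bias-bound} and Proposition~\ref{prop:variance-bound}, and to control the ``error operators''
$$E_t := \sum_{x,z\in\zon}\left(\Ex_{\zeta\gets\statedist}\left[(\ketbra{\zeta_{x,z}^*})^{\otimes t}\right] - \haarint{\phi}{t}\right), \qquad t\in\{2,3\},$$
which are exactly the quantities appearing in those bounds. The key structural observation is that, since $X$ and $Z$ are real, $\ket{\zeta_{x,z}^*}=X^xZ^z\ket{\zeta^*}$, so $\sum_{x,z}(\ketbra{\zeta_{x,z}^*})^{\otimes t}=\cT\big((\ketbra{\zeta^*})^{\otimes t}\big)$ for the diagonal Pauli-twirl map $\cT(A):=\sum_{x,z\in\zon}(X^xZ^z)^{\otimes t}A(Z^zX^x)^{\otimes t}$. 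Being a sum of unitary conjugations, $\cT$ preserves the positive-semidefinite order, and because each $(X^xZ^z)^{\otimes t}$ commutes with the symmetric-subspace projector it fixes the Haar moment up to scale, $\cT(\haarint{\phi}{t})=2^{2n}\haarint{\phi}{t}$.

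The main conceptual step is to push the relative (multiplicative) design guarantee through this machinery. I would first note that a relative $\epsilon$-approximate $3$-design is also a relative $\epsilon$-approximate $2$-design: tracing out one tensor factor is a positive map sending $\Ex_\zeta[\ketbra{\zeta}^{\otimes 3}]\mapsto\Ex_\zeta[\ketbra{\zeta}^{\otimes 2}]$ and $\haarint{\phi}{3}\mapsto\haarint{\phi}{2}$, so the sandwich $(1-\epsilon)\haarint{\phi}{t}\preceq\Ex_\zeta[\ketbra{\zeta}^{\otimes t}]\preceq(1+\epsilon)\haarint{\phi}{t}$ holds for both $t\in\{2,3\}$. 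Taking complex conjugates (which preserves the PSD order and fixes the real matrix $\haarint{\phi}{t}$) and then applying the order-preserving $\cT$ yields the clean operator bound $-\epsilon\,2^{2n}\haarint{\phi}{t}\preceq E_t\preceq\epsilon\,2^{2n}\haarint{\phi}{t}$.

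Next I would exploit positivity of $O$. Since $O\succeq 0$ and $\rho\succeq 0$, the test operators $\rho\otimes O$ and $\rho\otimes O\otimes O$ are PSD, so the sandwich on $E_t$ gives $|\tr(E_t M)|\le\epsilon\,2^{2n}\tr(\haarint{\phi}{t}M)$ for these $M$. The remaining work is purely Haar moment evaluation: the second moment comes from Proposition~\ref{prop:haar_two}/Equation~\ref{eq:trace-haar}, giving $\tr(\haarint{\phi}{2}(\rho\otimes O))=\tfrac{\tr(O)+\tr(\rho O)}{2^n(2^n+1)}$, while for the third moment I would write $\haarint{\phi}{3}=\tfrac{1}{2^n(2^n+1)(2^n+2)}\sum_{\pi\in S_3}P_\pi$ and evaluate $\tr(P_\pi(\rho\otimes O\otimes O))$ by cycle type, obtaining $\tr(O)^2+2\tr(\rho O)\tr(O)+\tr(O^2)+2\tr(\rho O^2)$.

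Plugging these into the propositions and using the elementary inequalities for positive $O$ and density $\rho$, namely $\tr(\rho O)\le\tr(O)$, $\tr(O^2)\le\tr(O)^2$ and $\tr(\rho O^2)\le\tr(O)^2$, the bias collapses to $\epsilon(\tr(O)+\tr(\rho O))\le 2\epsilon\tr(O)$. For the variance, the Haar part contributes $3\tr(O_0^2)$ via Equation~\ref{eq:var-eq-3}, and the two correction terms are bounded by $\epsilon\tfrac{d+1}{d+2}\cdot 6\tr(O)^2$ and $\epsilon\tfrac{d+1}{d}\cdot 4\tr(O)^2$ with $d=2^n$; their prefactor sum is $6\tfrac{d+1}{d+2}+4\tfrac{d+1}{d}=10+\tfrac{8-2d}{d(d+2)}$, which is at most $10$ exactly when $d\ge 4$, i.e. $n\ge 2$, yielding the claimed $10\epsilon\tr(O)^2$. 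I expect the main obstacle to be the second step — transferring a multiplicative operator sandwich cleanly through both complex conjugation and the twirl $\cT$ — together with the third-moment bookkeeping; it is reassuring, and a good sanity check, that the stated constant $10$ is attained precisely at $n=2$, which is exactly where the hypothesis $n\ge 2$ enters.
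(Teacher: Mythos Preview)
Your proposal is correct and follows essentially the same route as the paper: both invoke Propositions~\ref{prop:bias-bound} and~\ref{prop:variance-bound}, transfer the relative-design sandwich through complex conjugation and the Pauli shifts (the paper phrases this as ``for all $x,z$, $\{\ket{\psi_{k,x,z}^*}\}$ is also a relative design'' rather than via your twirl map $\cT$), and then bound the Haar moments against the positive test operators $\rho\otimes O$ and $\rho\otimes O\otimes O$, landing on the identical prefactors $6\tfrac{d+1}{d+2}+4\tfrac{d+1}{d}\le 10$ for $d\ge 4$. The only cosmetic difference is that you evaluate the third Haar moment explicitly via the $S_3$ permutation expansion, whereas the paper obtains the same bound more compactly through H\"older's inequality using $\lVert\haarint{\phi}{t}\rVert_\infty=1/\dim(sym^t)$ and $\lVert\rho\otimes O^{\otimes(t-1)}\rVert_1=\tr(O)^{t-1}$.
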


\begin{proof} 
We bound the bias of the estimator as follows, 
\begin{align}
| \Ex \left[ \tr( O  \hat{\rho} ) \right] - \tr( O \rho ) | 
    & \leq \frac{2^n+1}{2^n}
    \left|  \tr \left( \sum_{x,z} \left( \Ex_{k \gets \cK} [ (\ketbra{\psi_{k,x,z}^*})^{\otimes 2} ] - \haarint{\phi}{2} \right) \rho \otimes O \right) \right| \\
    & \leq \frac{\varepsilon(2^n+1)}{2^n} \cdot \tr \left( \left( \sum_{x,z} \int \ketbra{\phi}^{\otimes 2} d\mu(\phi) \right)  \rho \otimes O \right) \\
      & = \varepsilon 2^n (2^n+1) \cdot \tr \left( \left( \int \ketbra{\phi}^{\otimes 2} d\mu(\phi) \right)  \rho \otimes O \right) \\
     & = \varepsilon \left( \tr(\rho) \tr(O) + \tr( O \rho ) \right) \\
     & \leq 2 \varepsilon \tr(O).
\end{align}

For the first inequality, we use Proposition~\ref{prop:bias-bound}. The second inequality is derived by the definition of a relative $\epsilon$-approximate state 2-design in conjunction to the fact that for all $x,z$, $\{\ket{\psi_{k,x,z}^*}\}_{k \sim \cK}$ is also an approximate 2-design.
We use Proposition~\ref{prop:haar_two}, when going from the third line to the forth line. 

We now bound the variance as follows,
\begin{align}
 \mathrm{Var} \left[ \tr ( \hat{\rho} O ) \right]
 & \le 
 3 \tr( O_0^2 ) +  \varepsilon \, 2^n (2^n+1)^2 \bigg[ \tr \left( \left( \int \ketbra{\phi}^{\otimes 3} d\mu(\phi) \right) \rho \otimes O \otimes O \right) \\
 & \,\,\,\,\,\,\,\,\,\,\,\,\,\,\,\,\,\,\,\,\,\,\,\,\,\,\,\,\,\,\,\,\,\,\,\,\,\,\,\,\,\,\,\,\,\,\,\,\,\,\,\,\,\,\,\,\,\,\,\,\,\,\,\,\,\,\,\,\,\,\,\,\,\,\,\,\,\,\,\,\,\,\,\,\,+ \frac{2 \tr(O)}{2^n} \tr\left( \left( \int \ketbra{\phi}^{\otimes 2} d\mu(\phi) \right) \rho \otimes O \right) \bigg] \nonumber \\
& \leq 3 \tr( O_0^2 ) +  \varepsilon \, 2^n (2^n+1)^2 \Bigg[ \tr( O )^2 \left\lVert  \int \ketbra{\phi}^{\otimes 3} d\mu(\phi) \right\rVert_\infty  \\
 & \,\,\,\,\,\,\,\,\,\,\,\,\,\,\,\,\,\,\,\,\,\,\,\,\,\,\,\,\,\,\,\,\,\,\,\,\,\,\,\,\,\,\,\,\,\,\,\,\,\,\,\,\,\,\,\,\,\,\,\,\,\,\,\,\,\,\,\,\,\,\,\,\,\,\,\,\,\,\,\,\,\,\,\,\,+  \frac{2 \tr(O)^2}{2^n} \left\lVert \int \ketbra{\phi}^{\otimes 2} d\mu(\phi) \right\rVert_\infty \Bigg] \nonumber \\
& = 3 \tr( O_0^2 ) +  \varepsilon 2^n (2^n+1)^2 \left( \frac{6 \tr(O)^2}{2^n(2^n+1)(2^n+2)} + \frac{4 \tr(O)^2}{2^{2n}(2^n+1)} \right) \\
& \leq 3 \tr( O_0^2 ) +  10 \varepsilon \tr(O)^2.
\end{align}
The first inequality follows from proposition~\ref{prop:variance-bound} and the relative approximation. The second inequality follows by bounding each trace via H\"older's inequality, and using $\lVert \rho \rVert_1 = 1$ and $\lVert O \rVert_1 = \tr(O)$, as $O$ is positive.
The equality follows from the well-known fact that $\haarint{\phi}{t} = \Pi_{sym^t} / \dim(sym^t)$ where $sym^t$ is the symmetric subspace over $t$ registers, and in particular $\dim(sym^2)=\frac{2^n(2^n+1)}{2}$ and $\dim(sym^3)=\frac{2^n(2^n+1)(2^n+2)}{6}$. The last inequality holds for $n \geq 2$. The lemma follows.
\end{proof}

\subsection{Additive Approximate Designs}

We now bound the bias and the variance in case the state design is only additive approximation.

\begin{figure}
    \centering
    \begin{mdframed}
    \centering
        \begin{quantikz}
        \lstick{{$R_1$ $\,\,\,\, \rho$}} & \qwbundle{n} &  \ctrl{1} & \gate{H^{\otimes n}} & \meter{z}
        & \setwiretype{c} &  \ctrl{3}
        \\
        \lstick{{$R_2$ $\ket{\zeta}$}} & \qwbundle{n}  & \targ{} & & \meter{x}
        & \setwiretype{c} & & \ctrl{2}
        \\
        \lstick{{$R_3$ $\ket{\zeta}$}} & \qwbundle{n}  & &  & & & \gate{Z^z} & \gate{X^x} & \meter{O^*}
        \\
        \lstick{{$R_4$ $\ket{\zeta}$}} & \qwbundle{n}  & &  & & & \gate{Z^z} & \gate{X^x} & \meter{O^*}
        \end{quantikz}
        
    \textbf{Input}:
        \begin{itemize}
            \item $\ket{\zeta}^{\otimes 3}$ - three copies of a state.
        \end{itemize}
        \textbf{Advice}:
        \begin{itemize}

            \item The value of $\norm{O}_{\infty} = \norm{O^*}_{\infty}$, and
            \item $\rho$ - a quantum state.
        \end{itemize}
        
        \begin{multicols}{2}
            \textbf{Expectation Distinguisher - $\adv_{\Ex}$:}
            \begin{enumerate}
                \item Run the circuit above to get a real-valued outcome (of measuring $O^*$) $\alpha$ in $R_3$.
                \item Calculate $p = \frac{1}{2} + \frac{\alpha}{2\norm{O}_{\infty}}$.
                
                \item Return $1$ with probability $p$ and $0$ with probability $1-p$.
            \end{enumerate}
            
            \columnbreak
            \textbf{Variance Distinguisher - $\adv_{Var}$:}
            \begin{enumerate}
                \item Run the circuit above to get a real-valued outcome (of measuring $O^*$) $\alpha_1$ in $R_3$ and $\alpha_2$ in $R_4$.
                \item Calculate $p = \frac{1}{2} + \frac{\alpha_1\alpha_2}{2\norm{O}^2_{\infty}}$.
                \item Return $1$ with probability $p$ and $0$ with probability $1-p$.
        \end{enumerate}
        \end{multicols}
    \end{mdframed}
    \caption{Description of the distinguishers.}
    \label{fig:distinguishers}
\end{figure}

\begin{lemma}
\label{lemma:additive-reduction}
Let $\statedist$ be a distribution over pure states. Let $\adv_{\Ex}$ be the expectation distinguisher from figure \ref{fig:distinguishers}, and let $\adv_{Var}$ be the variance distinguisher from figure \ref{fig:distinguishers}. Suppose that:
\begin{align}
 & \abs{\Pr_{\zeta \gets \haardist}[1 \gets \adv_{\Ex}(\ketbra{\zeta}^{\otimes 2})] -  \Pr_{\zeta \gets \statedist}[1 \gets \adv_{\Ex}(\ketbra{\zeta}^{\otimes 2}) \big]} \leq \epsilon
\end{align}
And also that
\begin{align}
 & \abs{\Pr_{\zeta \gets \haardist}[1 \gets \adv_{Var}(\ketbra{\zeta}^{\otimes 3})] -  \Pr_{\zeta \gets \statedist}[1 \gets \adv_{Var}(\ketbra{\zeta}^{\otimes 3}) \big]} \leq \epsilon
\end{align}
Then, when running the classical shadow algorithm from figure \ref{fig:design-circuit} with respect to $\statedist$, any \emph{positive} observable $O$ can be estimated using a single classical snapshot with bias
\begin{equation}
	\left| \Ex \left[\tr( O \hat{\rho} )\right] - \tr( O \rho ) \right| \leq 2(2^n+1)\varepsilon \norm{O}_{\infty},
\end{equation}
and variance
\begin{equation}
    \text{\emph{Var}}\left[\tr(O \hat{\rho}) \right] \leq 3 \tr( O_0^2 ) +  6\epsilon\norm{O}^2_\infty(2^n+1)^2
\end{equation}
\end{lemma}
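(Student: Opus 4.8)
The plan is to read this lemma as a \emph{reduction}: Propositions~\ref{prop:bias-bound} and \ref{prop:variance-bound} already express the bias and variance in terms of the moment-difference quantities $\sum_{x,z}\tr\big((\Ex_{\zeta\gets\statedist}[(\ketbra{\zeta_{x,z}^*})^{\otimes 2}]-\haarint{\phi}{2})\,\rho\otimes O\big)$ and its three-copy analogue. So it suffices to show that these exact quantities are governed by the advantages of $\adv_{\Ex}$ and $\adv_{Var}$, after which the two assumed $\epsilon$-bounds plug straight in. Accordingly, the first step is to compute the acceptance probabilities of the two distinguishers in closed form.

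For $\adv_{\Ex}$ I would trace the circuit of Figure~\ref{fig:distinguishers}. The Bell-measurement block on $R_1,R_2$ produces outcome $(x,z)$ with probability $\Pr[x,z\mid\zeta]=\tfrac{1}{2^n}\bra{\zeta_{x,z}^*}\rho\ket{\zeta_{x,z}^*}$ by Proposition~\ref{prop:probability}, while the corrections (first $Z^z$, then $X^x$) turn the copy of $\ket{\zeta}$ in $R_3$ into $X^x Z^z\ket{\zeta}=\ket{\zeta_{x,z}}$. Measuring $O^*$ there yields $\alpha$ with conditional expectation $\bra{\zeta_{x,z}}O^*\ket{\zeta_{x,z}}=\bra{\zeta_{x,z}^*}O\ket{\zeta_{x,z}^*}$, where the last equality uses $O^*=\overline{O}$, $\ketbra{\zeta_{x,z}^*}=\overline{\ketbra{\zeta_{x,z}}}$, and the fact that both traces are real since $O$ is Hermitian. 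Averaging over $(x,z)$ and $\zeta$ gives
$$\Pr[1\gets\adv_{\Ex}]=\tfrac12+\tfrac{1}{2\norm{O}_\infty}\cdot\tfrac{1}{2^n}\sum_{x,z}\tr\big(\Ex_{\zeta}[(\ketbra{\zeta_{x,z}^*})^{\otimes 2}]\,\rho\otimes O\big),$$
and $p\in[0,1]$ is legitimate because $\abs{\alpha}\le\norm{O^*}_\infty=\norm{O}_\infty$. The identical trace-out for $\adv_{Var}$, now using that the two $O^*$-measurements on the independent copies in $R_3,R_4$ are conditionally independent given $(x,z,\zeta)$ so that $\Ex[\alpha_1\alpha_2\mid x,z,\zeta]=(\bra{\zeta_{x,z}^*}O\ket{\zeta_{x,z}^*})^2$, produces the analogous expression with $(\ketbra{\zeta_{x,z}^*})^{\otimes 3}$, $\rho\otimes O\otimes O$ and prefactor $\tfrac{1}{2\norm{O}_\infty^2}$.

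Next I would invoke that the Haar measure is invariant under complex conjugation and under the Pauli maps $\zeta\mapsto X^xZ^z\zeta$, so that for $\statedist=\haardist$ the moment $\Ex[(\ketbra{\zeta_{x,z}^*})^{\otimes t}]$ collapses to $\haarint{\phi}{t}$; hence the Haar-side acceptance probabilities coincide exactly with the Haar integrals. Subtracting the $\statedist$- and $\haardist$-acceptance probabilities therefore yields $\tfrac{1}{2^n}\sum_{x,z}\tr\big((\Ex_{\statedist}[(\ketbra{\zeta_{x,z}^*})^{\otimes 2}]-\haarint{\phi}{2})\rho\otimes O\big)=2\norm{O}_\infty\big(\Pr_{\statedist}[1\gets\adv_{\Ex}]-\Pr_{\haardist}[1\gets\adv_{\Ex}]\big)$, bounded in absolute value by $2\norm{O}_\infty\epsilon$, together with the three-copy identity bounded by $2\norm{O}_\infty^2\epsilon$. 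For the bias, I multiply the two-copy quantity by $2^n$ to match the $\sum_{x,z}$ normalisation in Proposition~\ref{prop:bias-bound}; the prefactor $\tfrac{2^n+1}{2^n}$ then combines with this $2^n$ to give exactly $2(2^n+1)\epsilon\norm{O}_\infty$.

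For the variance I substitute both estimates into Proposition~\ref{prop:variance-bound}: the three-copy term contributes $(2^n+1)^2\cdot 2\epsilon\norm{O}_\infty^2$, and the two-copy term contributes $\tfrac{2(2^n+1)^2\tr(O)}{2^{2n}}\cdot 2^n\cdot 2\epsilon\norm{O}_\infty$. The final move, and the one place positivity of $O$ enters, is the bound $\tr(O)\le 2^n\norm{O}_\infty$ (from $O\preceq\norm{O}_\infty I$), which turns the second contribution into $4\epsilon\norm{O}_\infty^2(2^n+1)^2$; adding yields $3\tr(O_0^2)+6\epsilon\norm{O}_\infty^2(2^n+1)^2$, matching the claim. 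I expect the main obstacle to be purely bookkeeping rather than conceptual: correctly tracking the interlocking complex-conjugation and Pauli relations among $\ket{\zeta_{x,z}}$, $\ket{\zeta_{x,z}^*}$, $O$ and $O^*$ as they pass through the circuit, and verifying the Haar invariances that make the Haar-side acceptance probabilities agree \emph{exactly} with the Haar moment integrals, so that the differences isolate precisely the terms appearing in the two propositions.
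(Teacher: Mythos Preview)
Your proposal is correct and follows essentially the same route as the paper: compute the acceptance probabilities of $\adv_{\Ex}$ and $\adv_{Var}$ via Proposition~\ref{prop:probability} and the relation $\bra{\zeta_{x,z}}O^*\ket{\zeta_{x,z}}=\bra{\zeta_{x,z}^*}O\ket{\zeta_{x,z}^*}$, subtract the Haar and $\statedist$ sides to isolate the moment differences, and then feed these bounds into Propositions~\ref{prop:bias-bound} and~\ref{prop:variance-bound}. One small remark: the paper actually uses $\abs{\tr(O)}\le 2^n\norm{O}_\infty$, which holds for every Hermitian $O$, so positivity is not in fact essential at the step you flag (nor anywhere else in the argument).
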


\begin{proof}
We consider the algorithms depicted in figure~\ref{fig:distinguishers}, and use the expectation distinguisher to bound the expectation, and the variance distinguisher to bound the variance.

We start by bounding the expectation. We calculate the probability to get $\alpha$ in $R_3$. We already calculated the probability to get $x,z$ in $R_1,R_2$ in Proposition~\ref{prop:probability}. Recall that $O = \sum_i \alpha_i\Pi_i$, where $\alpha$ are real-valued, so $O^* = \sum_i \alpha_i\Pi^*_i$. Therefore, we get:
\begin{align}
    \Pr \Big[\mathrm{measure\ }\alpha_i \Big| \zeta, \rho \Big] 
    &=
    \sum_{x,z}
    \Pr \Big[x,z \Big| \zeta, \rho \Big]
    \cdot \left| \Pi^*_i X^x Z^z \ket{\zeta} \right|^2 
    \\&=
    \sum_{x,z}
    \frac{1}{2^n} \tr(\rho \ketbra{\zeta_{x,z}^*}) \tr(\Pi^*_i \ketbra{\zeta_{x,z}})
    \\&=
    \sum_{x,z}
    \frac{1}{2^n} \tr(\rho \ketbra{\zeta_{x,z}^*}) \tr(\Pi_i \ketbra{\zeta_{x,z}^*})
    \\&=
    \sum_{x,z}
    \frac{1}{2^n} \tr \left(\ketbra{\zeta_{x,z}^*}^{\otimes 2} \cdot \rho \otimes \Pi_i \right)~,
\end{align}
Where going from the first line to the second line relies on the fact that $\Pi^*_\alpha$ is also an orthogonal projection, so $(\Pi^*_\alpha)^\dagger \Pi^*_\alpha = \Pi^*_\alpha$. Going from the second line to the third line relies on the fact that $\tr(\Pi^*_{\alpha} \ketbra{\zeta_{x,z}}) = \tr(\Pi_{\alpha} \ketbra{\zeta_{x,z}^*})$ since it is a real number, hence equals its conjugate. So, the probability of $\adv_{\Ex}$ to output $1$ is therefore:
\begin{align}
    \Pr[1 \gets \adv_{\Ex}| \zeta, \rho] 
    &= \sum_{i} \left(\frac{1}{2} +\frac{\alpha_i}{2\norm{O}_\infty} \right) \Pr[\alpha_i] \\
    &= \frac{1}{2} + \sum_i \frac{\alpha_i}{2\norm{O}_\infty} \sum_{x,z}
    \frac{1}{2^n} \tr \left(\ketbra{\zeta_{x,z}^*}^{\otimes 2} \cdot \rho \otimes \Pi_i \right) \\
    &= \frac{1}{2} + \frac{1}{2\norm{O}_\infty}\sum_{x,z}
    \frac{1}{2^n} \tr \left(\ketbra{\zeta_{x,z}^*}^{\otimes 2} \cdot \rho \otimes O \right)
\end{align}

We can use the bound on the the distinguishing probability as follows,
\begin{align}
    \epsilon & \ge \abs{\Pr_{\zeta \gets \haardist}[1 \gets \adv_{\Ex}(\ketbra{\zeta}^{\otimes 2})] -  \Pr_{\zeta \gets \statedist}[1 \gets \adv_{\Ex}(\ketbra{\zeta}^{\otimes 2}) \big]}
    \\& =
    \frac{1}{2\norm{O}_{\infty}}\left| \tr \left( \frac{1}{2^n} \sum_{x,z}\left( \haarint{\phi}{2} - \Ex_{\zeta \gets \statedist} \left[ \ketbra{\zeta_{x,z}^*}^{\otimes 2} \right]\right) \cdot \rho \otimes O \right) \right|
    \label{eq:2-design-bound}
    \\&\geq
    \frac{1}{2\norm{O}_{\infty}(2^n+1)}\left| \tr(O\rho) - \Ex\left[\tr(O\hat{\rho})\right] \right| ~,
\end{align}
Where the last inequality follows from Proposition~\ref{prop:bias-bound}. This proves the bound on the expectation.
We turn to bound the variance. Recall that by Proposition~ \ref{prop:variance-bound}, we have that
\begin{align} \label{eq:bounds-on-var}
    \text{\emph{Var}}\left[\tr(O \hat{\rho}) \right] 
    &\leq 3 \tr( O_0^2 ) \\
    &\,\,\,\,\, + \Bigg| \frac{(2^n+1)^2}{2^n} \sum_{x,z} \bigg[ \tr \left( \left( \Ex_{\zeta \gets \statedist} [ (\ketbra{\zeta_{x,z}^*})^{\otimes 3} ] - \int \ketbra{\phi}^{\otimes 3} d\mu(\phi) \right) \rho \otimes O \otimes O \right) \bigg] \Bigg|  \nonumber \\
   &\,\,\,\,\, +\Bigg|\frac{2 (2^n+1)^2\tr(O)}{2^{2n}} \sum_{x,z}\bigg[ \tr \left( \left( \Ex_{\zeta \gets \statedist} [ (\ketbra{\zeta_{x,z}^*})^{\otimes 2} ] - \int \ketbra{\phi}^{\otimes 2} d\mu(\phi) \right) \rho \otimes O \right) \bigg] \Bigg| \nonumber \enspace .
\end{align}
We turn to bound the terms in above expression. First, similarly as with the expectation bound, we can calculate the probability of measuring $\alpha_i, \alpha_j$:
\begin{align}
    \Pr \Big[\mathrm{measure\ }\alpha_i,\alpha_j \Big| \zeta, \rho \Big] 
    =
    \sum_{x,z}
    \frac{1}{2^n} \tr \left(\ketbra{\zeta_{x,z}^*}^{\otimes 3} \cdot \rho \otimes \Pi_{i} \otimes \Pi_{j} \right)~.
\end{align}
Therefore, by similar argument, the probability of the distinguisher $\adv_{Var}$ to output 1 is
\begin{align}
    \Pr[1 \gets \adv_{Var} | \zeta, \rho] 
    &= \sum_{i,j}(\frac{1}{2} + \frac{\alpha_i\alpha_j}{2\norm{O}^2_{\infty}}) \cdot \Pr[\alpha_i,\alpha_j] \\
    &= \frac{1}{2} + \sum_{i,j}\frac{\alpha_i\alpha_j}{2\norm{O}^2_{\infty}} \cdot \sum_{x,z}
    \frac{1}{2^n} \tr \left(\ketbra{\zeta_{x,z}^*}^{\otimes 3} \cdot \rho \otimes \Pi_{i} \otimes \Pi_{j} \right) \\
    &= \frac{1}{2} + \frac{1}{2\norm{O}_{\infty}^2} \cdot \sum_{x,z}
    \frac{1}{2^n} \tr \left(\ketbra{\zeta_{x,z}^*}^{\otimes 3} \cdot \rho \otimes O \otimes O \right) ~.
\end{align}
So, by applying to bound on the distinguishing probability we get that
\begin{align}
    \epsilon & \ge \abs{\Pr_{\zeta \gets \haardist}[1 \gets \adv_{Var}(\ketbra{\zeta}^{\otimes 3})] -  \Pr_{\zeta \gets \statedist}[1 \gets \adv_{Var}(\ketbra{\zeta}^{\otimes 3}) \big]}
    \\& =
    \frac{1}{2\norm{O}^2_{\infty}}\left| \tr  \left(\frac{1}{2^n}\sum_{x,z}\left(\haarint{\phi}{3} -  
    \Ex_{\zeta \gets \statedist} \left[ \ketbra{\zeta_{x,z}^*}^{\otimes 3}\right]\right) \cdot \rho \otimes O \otimes O\right) \right| \enspace .
\end{align}
Or in other words
\begin{align} \label{eq:var-bound-3-terms}
    \Bigg| \frac{(2^n+1)^2}{2^n} \sum_{x,z} \bigg[ \tr \left( \left( \Ex_{\zeta \gets \statedist} [ (\ketbra{\zeta_{x,z}^*})^{\otimes 3} ] - \int \ketbra{\phi}^{\otimes 3} d\mu(\phi) \right) \rho \otimes O \otimes O \right) \bigg] \Bigg| \leq 2 \epsilon \norm{O}^2_{\infty} (2^n+1)^2 \enspace .
\end{align}
We turn to bound the second expression in Inequality~\ref{eq:bounds-on-var}. By Inequality~\ref{eq:2-design-bound} we have that
\begin{align} \label{eq:var-bound-2-terms}
    \Bigg|\frac{2 (2^n+1)^2\tr(O)}{2^{2n}} \sum_{x,z} \tr \left( \left( \Ex_{k \gets \cK} [ (\ketbra{\psi_{k,x,z}^*})^{\otimes 2} ] - \int \ketbra{\phi}^{\otimes 2} d\mu(\phi) \right) \rho \otimes O \right) \Bigg| &\leq \\
    & \leq \frac{4\epsilon \abs{\tr(O)}\norm{O}_\infty(2^n+1)^2}{2^n} \\
    &\leq 4\epsilon \norm{O}^2_\infty(2^n+1)^2 \enspace,
\end{align}
where the last inequality follows since $\abs{\tr(O)} \leq 2^n\norm{O}_\infty$. The final bound is achieved by combining Inequality~\ref{eq:var-bound-3-terms} and Inequality~\ref{eq:var-bound-2-terms}.
\end{proof}

\begin{theorem}
\label{theorem:addative_snapshot}
Let $\{\ket{\psi_k}\}_{k \sim \cK}$ be an $\epsilon$-approximate \emph{additive} state 3-design. Then, any \emph{positive} observable $O$ can be estimated using a single classical snapshot with bias

\begin{equation}
	\left| \Ex \left[\tr( O \hat{\rho} )\right] - \tr( O \rho ) \right| \leq (2^n+1)\varepsilon \norm{O}_{\infty},
\end{equation}
and variance

\begin{equation}
    \text{\emph{Var}}\left[\tr(O \hat{\rho}) \right] \leq 3 \tr( O_0^2 ) + 3\epsilon\norm{O}^2_\infty(2^n+1)^2
\end{equation}
\end{theorem}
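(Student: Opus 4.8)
The plan is to obtain Theorem~\ref{theorem:addative_snapshot} directly from the reduction Lemma~\ref{lemma:additive-reduction}. That lemma already bounds the bias and variance of the estimator entirely in terms of the advantages of the two distinguishers $\adv_{\Ex}$ and $\adv_{Var}$ of Figure~\ref{fig:distinguishers} against the ensemble $\statedist = \{\ket{\psi_k}\}_{k\sim\cK}$. Thus it suffices to show that, under the additive $\epsilon$-approximate $3$-design hypothesis, each of these distinguishers has advantage at most $\epsilon/2$, and then to invoke the lemma with the parameter $\epsilon/2$ in place of $\epsilon$. Substituting $\epsilon/2$ into the lemma's conclusions gives bias $\le 2(2^n+1)(\epsilon/2)\norm{O}_{\infty} = (2^n+1)\epsilon\norm{O}_{\infty}$ and variance $\le 3\tr(O_0^2) + 6(\epsilon/2)\norm{O}_{\infty}^2(2^n+1)^2 = 3\tr(O_0^2) + 3\epsilon\norm{O}_{\infty}^2(2^n+1)^2$, which are precisely the claimed bounds.

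The key step is to recognize that, although randomized, each distinguisher implements a binary POVM $\{M, I - M\}$ with $0 \preceq M \preceq I$. For $\adv_{\Ex}$ the acceptance probability $p = \tfrac12 + \tfrac{\alpha}{2\norm{O}_{\infty}}$ lies in $[0,1]$ because the eigenvalues $\alpha$ of $O^*$ coincide with those of $O$ and hence satisfy $\abs{\alpha} \le \norm{O}_{\infty}$; the induced measurement operator is $M_{\Ex} = \sum_i p_i \Pi_i$, a combination of orthogonal projectors with coefficients in $[0,1]$, so $0 \preceq M_{\Ex} \preceq I$. The same holds for $\adv_{Var}$ since $\abs{\alpha_1\alpha_2} \le \norm{O}_{\infty}^2$ forces its acceptance probability into $[0,1]$. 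Consequently, for any two states $\sigma_1, \sigma_2$ the distinguishing advantage obeys the standard trace-distance bound $\abs{\tr(M(\sigma_1 - \sigma_2))} \le \tfrac12 \norm{\sigma_1 - \sigma_2}_1$.

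It remains to apply this bound. The distinguisher $\adv_{\Ex}$ acts on two copies, so its advantage between the Haar ensemble and $\statedist$ is at most $\tfrac12 \norm{\Ex_{\zeta\gets\statedist}[\ketbra{\zeta}^{\otimes 2}] - \haarint{\phi}{2}}_1$. This two-copy trace distance is controlled by the three-copy trace distance $\norm{\Ex_{\zeta\gets\statedist}[\ketbra{\zeta}^{\otimes 3}] - \haarint{\phi}{3}}_1 \le \epsilon$ by monotonicity of the trace norm under the partial trace, giving advantage $\le \epsilon/2$. The distinguisher $\adv_{Var}$ acts on three copies, so its advantage is directly at most $\tfrac12 \cdot \epsilon = \epsilon/2$ from the $3$-design assumption. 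Both hypotheses of Lemma~\ref{lemma:additive-reduction} therefore hold with parameter $\epsilon/2$, and the theorem follows.

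I expect the only genuinely delicate points to be bookkeeping ones: first, that the additive trace-distance bound of $\epsilon$ yields a distinguishing advantage of $\epsilon/2$ (not $\epsilon$), which is exactly what produces the factor-$2$ improvement over Lemma~\ref{lemma:additive-reduction}; and second, the verification that the randomized acceptance rules of Figure~\ref{fig:distinguishers} correspond to legitimate POVM elements and that the $2$-design distance is dominated by the $3$-design distance. Once these are in place, the argument is a direct substitution into the already-established reduction.
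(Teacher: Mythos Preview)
Your proposal is correct and follows essentially the same route as the paper: apply Lemma~\ref{lemma:additive-reduction} after noting that the operational interpretation of trace distance bounds each distinguisher's advantage by $\epsilon/2$, then substitute. Your write-up is in fact slightly more careful than the paper's, since you spell out both that the randomized acceptance rules define valid POVM elements and that the $2$-copy trace distance is dominated by the $3$-copy one via monotonicity under partial trace; the paper leaves both points implicit.
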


\begin{proof}
Direct application of lemma \ref{lemma:additive-reduction}. By the operational interpretation of the $\ell_1$-norm for density matrices, it holds that
\begin{multline}
       \Bigg| \Pr_{k \gets \cK}[1 \gets \adv_{\Ex}(\ketbra{\psi_k}^{\otimes 2})] - \Pr_{\phi \gets \mu}[1 \gets \adv_{\Ex}(\ketbra{\psi}^{\otimes 2})] \Bigg| \\
    \leq \frac{1}{2} \norm{\Ex_{k \gets \cK} [ (\ketbra{\psi_k})^{\otimes 2} ] - \int \ketbra{\phi}^{\otimes 2} d\mu(\phi)}_1 
   \leq \frac{\varepsilon}{2} \enspace,
\end{multline}
and similarly it holds that
\begin{multline}
   \Bigg| \Pr_{k \gets \cK}[1 \gets \adv_{Var}(\ketbra{\psi_k}^{\otimes 3})] - \Pr_{\phi \gets \mu}[1 \gets \adv_{Var}(\ketbra{\psi}^{\otimes 3})] \Bigg|  \\
    \leq \frac{1}{2} \norm{\Ex_{k \gets \cK} [ (\ketbra{\psi_k})^{\otimes 3} ] - \int \ketbra{\phi}^{\otimes 3} d\mu(\phi)}_1
   \leq \frac{\varepsilon}{2} \enspace .
\end{multline}
The result then follows from lemma \ref{lemma:additive-reduction}.
\end{proof}

\subsection{State Pseudo-Designs}

Next, we  consider a state design that is only \emph{compuationally} close to a Haar random state, and show that it suffices to estimate any efficient observable.

For simplicity, we allow real-valued classical wires with arbitrary polynomial many digits as inputs or outputs of a quantum circuit. A quantum circuit can toss a coin with probability $p$ for a real-valued $p$  with arbitrary polynomial many digits. We also assume that the gate set is closed under complex conjugation.

\begin{theorem}
\label{theorem:computational_snapshot}
Let $O$ be observable over $n$ qubits with complexity $t$. Let $(\keygen,\stategen)$ be a $(T,\epsilon)$-State $3$-pseudo-design Generator. For a quantum state $\rho$, let $\hat{\rho}$ be the random variable of the classical snapshot with respect to state distribution induced by $(\keygen,\stategen)$. Then there exists a constant $c$ such that if  $T = c \cdot \max(t, n)$, $O$ can be estimated using a single classical snapshot with bias
\begin{equation}
	\left| \Ex \left[\tr( O \hat{\rho} )\right] - \tr( O \rho ) \right| \leq 2(2^n+1)\varepsilon \norm{O}_{\infty},
\end{equation}
and variance
\begin{equation}
 \text{\emph{Var}}\left[\tr(O \hat{\rho}) \right] \leq 3 \tr( O_0^2 ) +  6\epsilon\norm{O}^2_\infty(2^n+1)^2
\end{equation}
\end{theorem}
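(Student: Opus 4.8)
The plan is to derive this theorem as an almost immediate consequence of Lemma~\ref{lemma:additive-reduction}, which is already tailored to accept an arbitrary state distribution $\statedist$ together with two distinguishing hypotheses and return exactly the bias and variance bounds claimed here. Indeed, the bounds in the present statement coincide verbatim with those produced by the lemma (bias $2(2^n+1)\varepsilon\norm{O}_\infty$ and variance $3\tr(O_0^2)+6\epsilon\norm{O}^2_\infty(2^n+1)^2$). So I would take $\statedist$ to be the state distribution induced by sampling $k \gets \keygen(1^n)$ and setting $\ket{\zeta}=\stategen(1^n,k)$, take $\haardist$ to be the Haar measure $\mu$, and reduce the entire theorem to verifying the two hypotheses of Lemma~\ref{lemma:additive-reduction}, namely that both $\adv_{\Ex}$ and $\adv_{Var}$ from Figure~\ref{fig:distinguishers} distinguish $\statedist$ from $\haardist$ with advantage at most $\epsilon$. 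Note the slight bookkeeping difference from the additive case: there, the operational interpretation of the $\ell_1$ norm introduces a factor of $\tfrac12$, so one plugs $\epsilon/2$ into the lemma; here the pseudorandomness guarantee of Definition~\ref{def:pseudodesign} bounds the advantage by $\epsilon$ directly, which is exactly why the present bounds carry an extra factor of $2$ relative to Theorem~\ref{theorem:addative_snapshot}.

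The crux of the argument is the circuit-size accounting that justifies $T = c\cdot\max(t,n)$. I would argue that each distinguisher in Figure~\ref{fig:distinguishers} is a quantum circuit of size $O(\max(t,n))$, as follows. The Bell-basis stage (the $n$ CNOT gates, the layer $H^{\otimes n}$, the standard-basis measurements producing $z,x$, and the Pauli corrections $Z^z$ and $X^x$ applied to registers $R_3,R_4$) consists of $O(n)$ elementary gates. The measurement of $O^*$ costs $t$ gates: since $O$ has complexity $t$ and the gate set is assumed closed under complex conjugation, the complex-conjugate observable $O^*$ has complexity $t$ as well, so conjugating each gate in the size-$t$ circuit for $O$ yields a size-$t$ circuit measuring $O^*$. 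Finally, the classical post-processing — computing $p=\tfrac12+\tfrac{\alpha}{2\norm{O}_\infty}$ (resp. using $\alpha_1\alpha_2$) from the measured eigenvalue(s) and the advice $\norm{O}_\infty$, and flipping a biased coin with probability $p$ — is elementary under the stated conventions allowing real-valued wires and real-probability coin tosses. Summing, the total size is at most $c\cdot\max(t,n)=T$ for a suitable constant $c$, where the input $\rho$ and the scalar $\norm{O}_\infty$ are supplied as the (quantum, resp. classical) advice permitted by Definition~\ref{def:pseudodesign}.

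With the size bound in hand, I would invoke the pseudorandomness property. The distinguisher $\adv_{Var}$ consumes exactly three copies of $\zeta$, matching the $3$-pseudo-design; $\adv_{\Ex}$ consumes only two copies, so I would pad it to a three-copy distinguisher that simply discards the third copy, which does not increase its size beyond $T$. Applying Definition~\ref{def:pseudodesign} to each of these size-$\le T$ distinguishers bounds by $\epsilon$ the gap between their acceptance probability on three copies from the generator ($D_1$) and on three Haar copies ($D_2$); after discarding the unused copy for $\adv_{\Ex}$, these are precisely the two hypotheses of Lemma~\ref{lemma:additive-reduction} with $\statedist$ the generator's distribution and $\haardist=\mu$. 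The claimed bias and variance bounds then follow directly from the lemma. I expect the main obstacle to be the size-counting step, and in particular pinning down that $O^*$ is measurable at cost $t$ via the closed-under-conjugation assumption and that the real-arithmetic/coin-flip post-processing does not inflate the circuit size; the two-versus-three-copy mismatch for $\adv_{\Ex}$ is a minor point resolved by padding.
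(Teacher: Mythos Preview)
Your proposal is correct and follows essentially the same approach as the paper: reduce directly to Lemma~\ref{lemma:additive-reduction}, observe that $O^*$ has complexity $t$ by the closed-under-conjugation assumption, and verify that the distinguishers $\adv_{\Ex}$ and $\adv_{Var}$ have size $O(\max(t,n))$ so that the pseudorandomness guarantee of Definition~\ref{def:pseudodesign} supplies the two hypotheses of the lemma. Your write-up is in fact more careful than the paper's terse proof, explicitly handling the two-versus-three-copy mismatch for $\adv_{\Ex}$ via padding and spelling out the role of $\rho$ and $\norm{O}_\infty$ as advice.
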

\begin{proof}

This is a direct application of Lemma~\ref{lemma:additive-reduction}. In particular, if $O$ has complexity $t$, so does $O^*$. Flipping a coin in the reduction with the desired probability can be done by a quantum circuit of size $O(t)$. So overall, the reductions $\adv_{\Ex}$ and $\adv_{Var}$ have complexity $O(\max(t,n))$. Therefore, there exists a constant $c$ such that by setting $T = c \cdot \max(t, n)$, we can apply to the security of the $3$-pseudo-design Generator, and by  Lemma~\ref{lemma:additive-reduction} get the desired result.
\end{proof}

\subsection{Proof of main theorems}
\label{section:proofs-main-theorem}
We now ready to prove our main theorems.
\begin{proof}[Proof of Theorems \ref{theorem:relative-guarantees}, \ref{theorem:additive-guarantees} and \ref{theorem:computational-guarantees}]
We begin with Theorem~\ref{theorem:relative-guarantees}. The claim follows from combining the variance and bias bounds from Theorem~\ref{theorem:relative_snapshot} with rigorous performance guarantee for median of means estimation \cite{nemirovski1983medianmeans,JERRUM1986169}: Let $X$ be a random variable with variance $\sigma^2$. Then, $K$ independent sample means of size $L=34 \sigma^2/\gamma^2$ suffice to construct a median of means estimator $\hat{\mu}(L,K)$ that obeys
$\mathrm{Pr} \left[ \left| \hat{\mu}(L,K)-\mathbb{E} \left[ X \right] \right| \geq \gamma \right] \leq 2 \mathrm{e}^{-K/2}$ for all $\gamma >0$.
The parameters $L$ and $K$ are chosen such that this general statement ensures
$
\mathrm{Pr} \left[ \left| \hat{o} (L,K) - \Ex\left(\tr\left( O_i \hat{\rho} \right)\right) \right| \geq \epsilon \right] \leq \delta
$.
Finally, by Theorem~\ref{theorem:relative_snapshot} we have the following bound on the bias:
\begin{align}
    \left| \Ex \left[\tr( O \hat{\rho} ) \right] - \tr( O \rho ) \right| \leq 2 \varepsilon \tr(O) \enspace .
\end{align}
The result is then obtained by the triangle inequality.

The proofs of Theorem~\ref{theorem:additive-guarantees} and Theorem~\ref{theorem:computational-guarantees} are by the same argument, by plugging in the corresponding bounds on the variance and bias from Theorem~\ref{theorem:addative_snapshot} and Theorem~\ref{theorem:computational_snapshot}, respectively.
\end{proof}

\appendix

\section{Additive vs.\ Relative Approximate State Designs}
\label{apx:addrelstate}

In this section we show for completeness how additive approximation translates to relative approximation. For $n, t \in \mathbb{N}$, denote $N = 2^n$, and let $Sym_{N,t}$ be the symmetric subspace over $n$ qubits with $t$ copies. We begin with showing a positive result, which resembles the results of \cite{schuster2024random, Brand_o_2016} for unitary designs:
\begin{lemma}
\label{lemma:additive-to-relative}
Let $n, t, \lambda\in \mathbb{N}$ and let $\epsilon >0$. Let $\cK$ be a distribution over $\binset^\secp$ such that the ensemble of states $\{ \ket{\psi_k} \}_{k \sim \cK}$ is an \emph{additive} $\epsilon$-approximate state $t$-design over $n$ qubits. Then $\{ \ket{\psi_k} \}_{k \sim \cK}$ is also a \emph{relative} $\epsilon'$-approximate state $t$-design, for $\epsilon' \geq dim(Sym_{N,t}) \cdot \epsilon$ \footnote{Note it could be the case that $\epsilon' \geq 1$, if $\epsilon \geq \frac{1}{dim(Sym_{N,t})}$.}.
\end{lemma}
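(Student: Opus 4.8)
The plan is to work directly with the two averaged operators. Write $M := \Ex_{k \gets \cK}[\ketbra{\psi_k}^{\otimes t}]$ for the $t$-th moment of the ensemble and $H := \haarint{\phi}{t}$ for the Haar moment, and set $\Delta := M - H$. The additive hypothesis is exactly $\norm{\Delta}_1 \le \epsilon$, while the relative conclusion is the pair of operator inequalities $-\epsilon' H \preceq \Delta \preceq \epsilon' H$. Since enlarging $\epsilon'$ only relaxes both inequalities, it suffices to establish them for $\epsilon' = \dim(Sym_{N,t}) \cdot \epsilon$. I would use throughout the identity $H = \Pi/d$, where $\Pi$ is the orthogonal projector onto the symmetric subspace $Sym_{N,t}$ and $d := \dim(Sym_{N,t})$; this is the same fact invoked in the proof of Theorem~\ref{theorem:relative_snapshot}.

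The crucial structural observation is that $\Delta$ is supported entirely on the symmetric subspace. Indeed, each $\ketbra{\psi_k}^{\otimes t}$ is a symmetric state, so $M$ lives inside $Sym_{N,t}$, and $H = \Pi/d$ trivially does as well; hence $\Pi \Delta \Pi = \Delta$ and $\Delta$ annihilates the orthogonal complement of $Sym_{N,t}$. This is the step that makes the whole argument work, and where I expect the only real subtlety to lie: a bound of the form $-\epsilon I \preceq \Delta \preceq \epsilon I$ on the full space is useless for a relative statement, because $H$ is rank-deficient and no finite multiple of it dominates $\epsilon I$. Restricting attention to the symmetric subspace, where $H$ is a positive multiple of the identity, is precisely what converts an additive bound into a relative one.

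With the support fact in hand the remainder is routine. First pass from the trace norm to the operator norm via $\norm{\Delta}_\infty \le \norm{\Delta}_1 \le \epsilon$, which gives $-\epsilon I \preceq \Delta \preceq \epsilon I$. Then, since $\Delta$ is supported on $Sym_{N,t}$, the quadratic form of $\Delta$ against any vector depends only on that vector's symmetric component, so the bound sharpens to $-\epsilon \Pi \preceq \Delta \preceq \epsilon \Pi$. Substituting $\Pi = d H$ yields $-\epsilon d\, H \preceq \Delta \preceq \epsilon d\, H$, i.e. $(1 - d\epsilon) H \preceq M \preceq (1 + d\epsilon) H$. This is exactly the relative $\epsilon'$-approximate design property for $\epsilon' = d\epsilon = \dim(Sym_{N,t}) \cdot \epsilon$, and hence for every $\epsilon' \ge \dim(Sym_{N,t}) \cdot \epsilon$, completing the proof.
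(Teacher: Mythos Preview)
Your proof is correct and follows essentially the same route as the paper: both arguments observe that $\Delta = M - H$ is supported on the symmetric subspace, bound $\norm{\Delta}_\infty \le \norm{\Delta}_1 \le \epsilon$, deduce $-\epsilon\,\Pi_{Sym_{N,t}} \preceq \Delta \preceq \epsilon\,\Pi_{Sym_{N,t}}$, and then substitute $\Pi_{Sym_{N,t}} = \dim(Sym_{N,t}) \cdot H$. The only cosmetic difference is that the paper diagonalizes $\Delta$ explicitly and treats the two operator inequalities separately, whereas you handle both directions at once.
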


\begin{proof} 
Let $\Pi_{Sym_{N,t}}$ be the projection onto the symmetric subspace $Sym_{N,t}$. Let
\begin{align}
   M = \mathbb{E}_{k \gets \cK}\left[\ketbra{\psi_k}^{\otimes t}\right] \enspace ,
\end{align}
and similarly let
\begin{align}
  H = \haarint{\phi}{t}  \enspace .
\end{align}
We will show that $M \preceq (1+dim(Sym_{N,t}) \cdot \epsilon) \cdot H$, the other direction is symmetrical.
Note that by construction $\Pi_{Sym_{N,t}} M = M$. Moreover, it is well known that $H = \frac{1}{dim(Sym_{N,t})}\Pi_{Sym_{N,t}}$ \cite{harrow2013churchsymmetricsubspace}. It follows that in an orthogonal diagonalized form we can write
\begin{align}
    M - H = \sum_{i} \alpha_i \ketbra{\psi_i} \enspace ,
\end{align}
Where for every $i$, $\psi_i \in Sym_{N,t}$. So,
\begin{align}
    M - H \preceq \max_i{\abs{\alpha_i}} \cdot \Pi_{Sym_{N,t}} \enspace .
\end{align}
Next, by our assumption it follows that
\begin{align}
    \norm{M - H}_\infty \leq \norm{M - H}_1 \leq \epsilon \enspace .
\end{align}
Combined, we get that
\begin{align}
     M - H \preceq \epsilon \cdot \Pi_{Sym_{N,t}} \enspace ,
\end{align}
which implies one direction, since $H = \frac{1}{dim(Sym_{N,t})}\Pi_{Sym_{N,t}}$. The other direction is symmetrical.
\end{proof}

\begin{corollary}
\label{corol:additive-to-relative}
Let $n, t, \lambda\in \mathbb{N}$ and let $\epsilon >0$. Let $\cK$ be a distribution over $\binset^\secp$ such that the ensemble of states $\{ \ket{\psi_k} \}_{k \sim \cK}$ is an \emph{additive} $\epsilon$-approximate state $t$-design over $n$ qubits. Then $\{ \ket{\psi_k} \}_{k \sim \cK}$ is also a \emph{relative} $\epsilon'$-approximate state $t$-design, for $\epsilon' = 2^{n \cdot t} \cdot \epsilon$.
\end{corollary}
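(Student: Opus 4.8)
The plan is to reduce the entire statement to Lemma~\ref{lemma:additive-to-relative}, which already does all the conceptual work: it shows that an additive $\epsilon$-approximate state $t$-design is a relative $\epsilon'$-approximate state $t$-design for \emph{any} $\epsilon' \geq dim(Sym_{N,t}) \cdot \epsilon$. Consequently, to establish the corollary it suffices to verify the single numerical inequality $dim(Sym_{N,t}) \leq 2^{nt} = N^t$, since then the choice $\epsilon' = 2^{nt}\epsilon$ is admissible for the lemma and we are done.

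First I would recall the standard fact that the symmetric subspace $Sym_{N,t}$ over $t$ copies of an $N$-dimensional space has dimension equal to the number of size-$t$ multisets drawn from an $N$-element alphabet, that is $dim(Sym_{N,t}) = \binom{N+t-1}{t}$. The remaining task is purely the elementary bound $\binom{N+t-1}{t} \leq N^t$. The cleanest way I would argue this is via a surjection: the map that sends a length-$t$ sequence over $[N]$ to its underlying (unordered) multiset is surjective, so the number of multisets is at most the number of sequences, giving $\binom{N+t-1}{t} \leq N^t$ immediately. (Equivalently, one could write $\binom{N+t-1}{t} = \prod_{j=0}^{t-1}\frac{N+j}{j+1}$ and bound each factor by $N$, which holds since $\frac{N+j}{j+1} \leq N$ is equivalent to $1 \leq N$.)

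Combining the two observations yields $dim(Sym_{N,t}) = \binom{N+t-1}{t} \leq N^t = 2^{nt}$, so setting $\epsilon' = 2^{nt}\epsilon \geq dim(Sym_{N,t}) \cdot \epsilon$ and invoking Lemma~\ref{lemma:additive-to-relative} completes the argument. I do not expect any genuine obstacle here: the lemma supplies the spectral/positivity reasoning, and all that is added is the combinatorial dimension count, whose only subtlety is remembering the multiset formula for $dim(Sym_{N,t})$ and the surjection bound $\binom{N+t-1}{t} \leq N^t$.
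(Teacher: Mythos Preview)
Your proposal is correct and follows exactly the paper's approach: invoke Lemma~\ref{lemma:additive-to-relative} and use the bound $\dim(Sym_{N,t}) \leq N^t = 2^{nt}$. The only difference is that you spell out the (elementary) justification for $\binom{N+t-1}{t} \leq N^t$, which the paper simply asserts.
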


\begin{proof}
This follows from Lemma~\ref{corol:additive-to-relative} since $dim(Sym_{N,t}) \leq N^t$.
\end{proof}

Next we show that in some cases Lemma~\ref{lemma:additive-to-relative} is tight up to a constant factor.

\begin{lemma}
\label{lemma:lower-bound}
Let $n, t \in \mathbb{N}$ such that $n \geq 2$, and let $0 < \epsilon < 1$. Then there exists an  ensemble of states $\{ \ket{\psi_k} \}_{k \sim \cK}$ which is an \emph{additive} $\epsilon$-approximate state $t$-design over $n$ qubits and not a \emph{relative} $\epsilon'$-approximate state $t$-design, for $\epsilon' < \frac{dim(Sym_{N,t}) \cdot \epsilon}{3}$.
\end{lemma}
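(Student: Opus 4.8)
The plan is to exhibit a single explicit ensemble whose $t$-th moment deviates from the Haar moment $H := \haarint{\phi}{t} = \frac{1}{\dim(Sym_{N,t})}\Pi_{Sym_{N,t}}$ in precisely the ``worst'' direction: a single rank-one projector lying inside the symmetric subspace. Concentrating all the additive slack along one eigenvector of $\Pi_{Sym_{N,t}}$ is exactly what forces the relative error to blow up by a factor proportional to $\dim(Sym_{N,t})$, matching Lemma~\ref{lemma:additive-to-relative} up to a constant. Concretely, writing $D = \dim(Sym_{N,t})$, I would fix any pure state $\ket{\psi_0}$ (say $\ket{0^n}$), set $p = \epsilon/2$, and define the ensemble as follows: with probability $1-p$ sample from a finite \emph{exact} state $t$-design, and with probability $p$ output the fixed state $\ket{\psi_0}$. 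Since $\epsilon < 1$ gives $p < 1$ this is a genuine distribution, and its $t$-th moment is
\[
M \;=\; (1-p)\,H \;+\; p\,\ketbra{\psi_0}^{\otimes t}.
\]

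First I would verify the additive guarantee. As $\ketbra{\psi_0}^{\otimes t}$ and $H$ are both density matrices, the triangle inequality gives $\norm{\ketbra{\psi_0}^{\otimes t} - H}_1 \le 2$, hence $\norm{M - H}_1 = p\,\norm{\ketbra{\psi_0}^{\otimes t} - H}_1 \le 2p = \epsilon$, so $M$ is an additive $\epsilon$-approximate state $t$-design. (Computing the norm exactly yields the smaller value $2(D-1)/D$, but the crude bound already suffices and is what produces the constant $3$ below.) Next I would compute the top eigenvalue of $M$ inside the symmetric subspace. Since $H = \frac{1}{D}I$ on $Sym_{N,t}$, we have $M|_{Sym_{N,t}} = \frac{1-p}{D}I + p\,\ketbra{\psi_0}^{\otimes t}$, a positive multiple of the identity plus a rank-one positive term, whose largest eigenvalue is $\lambda = \frac{1-p}{D} + p = \frac{1+p(D-1)}{D}$, attained along $\ket{\psi_0}^{\otimes t}$.

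Then I would turn this spectral computation into the failure of the relative bound. The relative upper bound $M \preceq (1+\epsilon')H = \frac{1+\epsilon'}{D}\Pi_{Sym_{N,t}}$ requires $\lambda \le \frac{1+\epsilon'}{D}$, i.e. $\epsilon' \ge p(D-1)$; equivalently, $M$ is \emph{not} a relative $\epsilon'$-design whenever $\epsilon' < p(D-1)$. Plugging $p = \epsilon/2$ and invoking $n \ge 2$ — so that $D \ge N = 2^n \ge 4 \ge 3$, whence $\tfrac{D-1}{2} \ge \tfrac{D}{3}$ — gives $p(D-1) = \frac{(D-1)\epsilon}{2} \ge \frac{D\epsilon}{3}$. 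Consequently, for every $\epsilon' < \frac{D\epsilon}{3} \le p(D-1)$ the upper relative bound is violated, so $M$ is not a relative $\epsilon'$-design, which is the claim.

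The only genuinely delicate point is \emph{realizability}: one must ensure $M$ arises from an honest distribution over classical keys, not merely from an abstract density matrix on $Sym_{N,t}$, since not every state of the symmetric subspace is a valid $t$-th moment. I would sidestep this entirely by building $M$ as an explicit convex combination of two manifestly realizable moments — the moment $H$ of a finite exact $t$-design (which exists and is indexable by bit strings) and the point-mass moment $\ketbra{\psi_0}^{\otimes t}$ — so that $M$ is realizable by construction and no separate argument is needed. Everything else reduces to the elementary spectral and $\ell_1$-norm estimates above, and the constant $3$ (rather than the sharper $2$) is simply an artifact of bounding $\norm{\ketbra{\psi_0}^{\otimes t}-H}_1$ by $2$ and of using $D-1 \ge \tfrac{2}{3}D$.
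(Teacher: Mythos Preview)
Your proof is correct and takes essentially the same approach as the paper: both mix the Haar moment with a single product state via $M=(1-\tfrac{\epsilon}{2})H+\tfrac{\epsilon}{2}\ketbra{\psi_0}^{\otimes t}$, bound $\norm{M-H}_1\le\epsilon$ by the triangle inequality, and then read off along $\ket{\psi_0}^{\otimes t}$ that the relative bound forces $\epsilon'\ge\tfrac{\epsilon}{2}(D-1)\ge\tfrac{D\epsilon}{3}$ once $D\ge 3$. Your explicit attention to realizability (obtaining $H$ as the moment of a finite exact $t$-design rather than an abstract density matrix) is a point the paper leaves implicit.
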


\begin{proof}
Let 
\begin{align}
    H = \haarint{\phi}{t}  \enspace ,
\end{align}
be $t$ copies of the Haar random distribution over $n$ qubits. Let $N = 2^n$ and $\psi \in \mathbb{C}^{N}$ be some unit vector, and let
\begin{align}
    M = (1-\frac{\epsilon}{2}) H + \frac{\epsilon}{2} \ketbra{\psi}^{\otimes t} \enspace .
\end{align}
It follows that
\begin{align}
    \norm{M - H}_{1} \leq \epsilon
\end{align}
Consider now the projection $\Pi_{\psi} = \ketbra{\psi}^{\otimes t}$. Let $\Pi_{Sym_{N,t}}$ be the projection onto the symmetric subspace $Sym_{N,t}$. Next, observe that by construction $\Pi_{\psi}$ is a projection on a subspace of the symmetric subspace, and in particular, $\Pi_{\psi}\Pi_{Sym_{N,t}} = \Pi_{\psi}$. Recall that $H = \frac{1}{dim(Sym_{N,t})}\Pi_{Sym_{N,t}}$ \cite{harrow2013churchsymmetricsubspace}. Therefore, we have that
\begin{align}
    \tr(\Pi_{\psi}H) &= \\
    &= \tr(\Pi_{\psi}\frac{1}{dim(Sym_{N,t})}\Pi_{Sym_{N,t}}) \\
    &= \frac{1}{dim(Sym_{N,t})}\tr(\Pi_{\psi}) \\
    &= \frac{1}{dim(Sym_{N,t})} \enspace .
\end{align}
We also have that
\begin{align}
    \tr(\Pi_{\psi}M) &= \\
    &= \tr(\Pi_{\psi}((1-\frac{\epsilon}{2}) H + \frac{\epsilon}{2} \ketbra{\psi}^{\otimes t}) \\
    &= (1-\frac{\epsilon}{2})\tr(\Pi_{\psi} H) +  \frac{\epsilon}{2}\tr(\Pi_{\psi} \ketbra{\psi}^{\otimes t}) \\
    &= \frac{1-\frac{\epsilon}{2}}{dim(Sym_{N,t})} + \frac{\epsilon}{2}
\end{align}
Now, in order to $M$ be a relative $\epsilon'$-approximate state $t$-design, it must hold that 
\begin{align}
    \tr(\Pi_{\psi}M) \leq (1+\epsilon')\tr(\Pi_{\psi}H) \enspace ,
\end{align}
or it other words it must hold that
\begin{align}
    \frac{1-\frac{\epsilon}{2}}{dim(Sym_{N,t})} + \frac{\epsilon}{2} \leq \frac{1+\epsilon'}{dim(Sym_{N,t})}
\end{align}
Simplifying, we get
\begin{align}
    \frac{\epsilon}{2}(dim(Sym_{N,t}) - 1) \leq \epsilon'
\end{align}
Next, for $dim(Sym_{N,t}) \geq 3$ we have that
\begin{align}
    \frac{dim(Sym_{N,t}) \cdot \epsilon}{3} \leq \frac{\epsilon}{2}(dim(Sym_{N,t}) - 1)
\end{align}
Finally, by monotonicity of $dim(Sym_{N,t})$ in $N$ and $t$, it follows that $dim(Sym_{N,t}) \geq 3$ when $t \geq 1, n \geq 2$.
\end{proof}

\begin{corollary}
\label{corol:lower-bound-additive-to-relative}
Let $n, t \in \mathbb{N}$ such that $n \geq 2$, and let $0 < \epsilon < 1$. Then there exists an  ensemble of states $\{ \ket{\psi_k} \}_{k \sim \cK}$ which is an \emph{additive} $\epsilon$-approximate state $t$-design over $n$ qubits and not a \emph{relative} $\epsilon'$-approximate state $t$-design, for $\epsilon' < \frac{2^{n\cdot t} \cdot \epsilon}{3\cdot t!}$.
\end{corollary}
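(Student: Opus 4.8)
The plan is to obtain this corollary as an immediate consequence of Lemma~\ref{lemma:lower-bound}, which already produces, for every $0 < \epsilon < 1$ and $n \geq 2$, an ensemble $\{\ket{\psi_k}\}_{k \sim \cK}$ that is additive $\epsilon$-approximate but fails to be relative $\epsilon'$-approximate for every $\epsilon' < \frac{dim(Sym_{N,t}) \cdot \epsilon}{3}$. Thus the only thing left to check is that the weaker threshold claimed here, $\frac{2^{n t} \cdot \epsilon}{3 \cdot t!}$, does not exceed the threshold $\frac{dim(Sym_{N,t}) \cdot \epsilon}{3}$ already delivered by the lemma. This reduces the entire statement to the single inequality $dim(Sym_{N,t}) \geq \frac{2^{n t}}{t!}$.

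To establish that inequality I would invoke the standard identity $dim(Sym_{N,t}) = \binom{N+t-1}{t}$ with $N = 2^n$, i.e.\ the number of size-$t$ multisets over $N$ symbols. Expanding, $\binom{N+t-1}{t} = \frac{1}{t!}\prod_{j=0}^{t-1}(N+j)$, and since each factor satisfies $N + j \geq N$, this product is at least $\frac{N^t}{t!} = \frac{2^{n t}}{t!}$. This one-line estimate is the entire computational content of the argument.

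Finally, combining the two observations, any $\epsilon'$ with $\epsilon' < \frac{2^{n t} \cdot \epsilon}{3 \cdot t!}$ also satisfies $\epsilon' < \frac{dim(Sym_{N,t}) \cdot \epsilon}{3}$, so the very ensemble furnished by Lemma~\ref{lemma:lower-bound} already witnesses the claim: it is additive $\epsilon$-approximate yet not relative $\epsilon'$-approximate in the stated range. I do not anticipate any real obstacle, as the containment of the two parameter ranges is trivial once the dimension bound is in hand. The only mild subtlety worth double-checking is the direction of the estimate: here we need the lower bound $dim(Sym_{N,t}) \geq \frac{2^{n t}}{t!}$, which is complementary to the upper bound $dim(Sym_{N,t}) \leq N^t$ used in the companion Corollary~\ref{corol:additive-to-relative}.
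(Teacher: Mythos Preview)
Your proposal is correct and matches the paper's own proof essentially verbatim: the paper also deduces the corollary from Lemma~\ref{lemma:lower-bound} together with the single inequality $dim(Sym_{N,t}) \geq \frac{N^t}{t!}$. Your added justification of that inequality via $\binom{N+t-1}{t} = \frac{1}{t!}\prod_{j=0}^{t-1}(N+j) \geq \frac{N^t}{t!}$ is fine and slightly more detailed than what the paper writes.
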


\begin{proof}
This follows from Lemma~\ref{lemma:lower-bound} since $dim(Sym_{N,t}) \geq \frac{N^t}{t!}$.
\end{proof}

Finally, for completeness, we show that a relative approximation also yields an additive one. 

\begin{lemma}
\label{lemma:relative-to-additive}
Let $n, t,\lambda\in \mathbb{N}$ and let $\epsilon >0$. Let $\cK$ be a distribution over $\binset^\secp$ such that the ensemble of states $\{ \ket{\psi_k} \}_{k \sim \cK}$ is a \emph{relative} $\epsilon$-approximate state $t$-design over $n$ qubits. Then $\{ \ket{\psi_k} \}_{k \sim \cK}$ is also an \emph{additive} $\epsilon$-approximate state $t$-design.
\end{lemma}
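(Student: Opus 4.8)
The plan is to convert the operator-order (spectral) sandwich guaranteed by relative approximation into a trace-norm bound, by exploiting that both moment operators are supported on the symmetric subspace. Write $M = \Ex_{k \gets \cK}[\ketbra{\psi_k}^{\otimes t}]$ and $H = \haarint{\phi}{t}$, and set $d = dim(Sym_{N,t})$. First I would record the two structural facts already used elsewhere in the appendix: each $\ketbra{\psi_k}^{\otimes t}$ lies in the symmetric subspace, so $M$ is supported on $Sym_{N,t}$; and $H = \frac{1}{d}\Pi_{Sym_{N,t}}$ (as invoked in the proofs of Lemma~\ref{lemma:additive-to-relative} and Lemma~\ref{lemma:lower-bound}, citing \cite{harrow2013churchsymmetricsubspace}). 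In particular $M - H$ is Hermitian and supported entirely on $Sym_{N,t}$, a subspace of dimension $d$.

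Next I would translate the relative hypothesis into a spectral-norm bound. By definition, $(1-\epsilon)H \preceq M \preceq (1+\epsilon)H$ is equivalent to $-\epsilon H \preceq M - H \preceq \epsilon H$. Substituting $H = \frac{1}{d}\Pi_{Sym_{N,t}}$ gives $-\frac{\epsilon}{d}\Pi_{Sym_{N,t}} \preceq M - H \preceq \frac{\epsilon}{d}\Pi_{Sym_{N,t}}$. Since every eigenvector of $M-H$ with nonzero eigenvalue lies in $Sym_{N,t}$ (where $\Pi_{Sym_{N,t}}$ acts as the identity), this sandwich says exactly that each eigenvalue $\alpha_i$ of $M-H$ obeys $|\alpha_i| \le \epsilon/d$, i.e.\ $\norm{M-H}_\infty \le \epsilon/d$.

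Finally I would conclude by a rank count. Because $M - H$ is Hermitian with support in a $d$-dimensional space, it has at most $d$ nonzero eigenvalues, each of absolute value at most $\epsilon/d$. Hence $\norm{M-H}_1 = \sum_i |\alpha_i| \le \mathrm{rank}(M-H)\cdot\norm{M-H}_\infty \le d \cdot \frac{\epsilon}{d} = \epsilon$, which is precisely the additive $\epsilon$-approximation guarantee of Definition~\ref{def:additive-state-design}.

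I do not expect a genuine obstacle here; the only point requiring care is the justification that $M - H$ has support confined to the symmetric subspace, so that the spectral bound against $\Pi_{Sym_{N,t}}$ truly controls all eigenvalues and the dimension count uses $d$ rather than the ambient $N^t$. Everything else is the elementary inequality $\norm{A}_1 \le \mathrm{rank}(A)\cdot\norm{A}_\infty$ applied to $A = M-H$.
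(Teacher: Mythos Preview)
Your proposal is correct and follows essentially the same approach as the paper's own proof: both arguments restrict to the symmetric subspace, use $H = \frac{1}{d}\Pi_{Sym_{N,t}}$ to convert the relative sandwich into an eigenvalue bound of $\epsilon/d$ on $M-H$, and then sum over the (at most) $d$ eigenvalues to obtain $\norm{M-H}_1 \le \epsilon$. The only cosmetic difference is that the paper writes the final step as $\sum_i |\gamma_i| \le \epsilon \sum_i \beta_i = \epsilon$ using $\tr(H)=1$, while you phrase it as $\norm{M-H}_1 \le \mathrm{rank}(M-H)\cdot\norm{M-H}_\infty$; these are the same computation.
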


\begin{proof}
Let
\begin{align}
   M = \mathbb{E}_{k \gets \cK}\left[\ketbra{\psi_k}^{\otimes t}\right] \enspace ,
\end{align}
and similarly let
\begin{align}
  H = \haarint{\phi}{t}  \enspace .
\end{align}
Let $\Pi_{Sym_{N,t}}$ be the projection onto the symmetric subspace $Sym_{N,t}$, and that by construction $\Pi_{Sym_{N,t}} M = M$. Recall that $H = \frac{1}{dim(Sym_{N,t})}\Pi_{Sym_{N,t}}$ \cite{harrow2013churchsymmetricsubspace}. It follows that in an orthogonal diagonalized form we can decompose
\begin{align}
    M = \sum_i \alpha_i\ketbra{\psi_i} \enspace ,
\end{align}
and similarly, we can decompose \footnote{Note that for every $i$ we have that $\beta_i = \frac{1}{dim(Sym_{N,t})}$.}
\begin{align}
    H = \sum_i \beta_i\ketbra{\psi_i}  \enspace .
\end{align}
Where for every $i$, $\psi_i \in Sym_{N,t}$.
By denoting $\gamma_i = \alpha_i - \beta_i$, we have that 
\begin{align}
    M - H = \sum_i \gamma_i\ketbra{\psi_i} \enspace .
\end{align}
Next, by the assumption that $M$ is a \emph{relative} $\epsilon$-approximate state $t$-design, it follows that
\begin{align}
    -\epsilon H \preceq M - H \preceq \epsilon H \enspace,
\end{align}
so for every $i$ we have that: 
\begin{align}
    \abs{\gamma_i} \leq \epsilon \cdot \beta_i \enspace.
\end{align}
Therefore, by the definition of the $\ell_1$ norm, and since $\sum_i \beta_i = 1$, it follows that
\begin{align}
    \norm{M - H}_1 = \sum_i \abs{\gamma_i} \leq \sum_i \epsilon\beta_i \leq \epsilon \enspace .
\end{align}
\end{proof}

\bibliographystyle{alpha}
\bibliography{Bibliography}

\end{document}